\newtheorem{theorem}{Theorem}
\newtheorem{corollary}{Corollary}[theorem]
\newtheorem{definition}{Definition}
\begin{document}

\title{Local stability of spheres via the convex hull and the radical Voronoi diagram}

\date{\today}
\author{Peter K. Morse}
\thanks{Corresponding author.}
\email{peter.k.morse@gmail.com}
\affiliation{Department of Chemistry, Princeton University, Princeton, NJ 08544}
\affiliation{Department of Physics, Princeton University, Princeton, NJ 08544}
\affiliation{Princeton Institute of Materials, Princeton University, Princeton, NJ 08544}
\author{Eric Corwin}
\thanks{Corresponding author.}
\email{ecorwin@uoregon.edu}
\affiliation{Department of Physics, University of Oregon, Eugene, OR 97403}
\affiliation{Materials Science Institute, University of Oregon, Eugene, OR 97403}

\begin{abstract}
Jamming is an emergent phenomenon wherein the local stability of individual particles percolates to form a globally rigid structure. However, the onset of rigidity does not imply that every particle becomes rigid, and indeed some remain locally unstable. These particles, if they become unmoored from their neighbors, are called \textit{rattlers}, and their identification is critical to understanding the rigid backbone of a packing, as these particles cannot bear stress. The accurate identification of rattlers, however, can be a time-consuming process, and the currently accepted method lacks a simple geometric interpretation. In this manuscript, we propose two simpler classifications of rattlers based on the convex hull of contacting neighbors and the maximum inscribed sphere of the radical Voronoi cell, each of which provides geometric insight into the source of their instability. Furthermore, the convex hull formulation can be generalized to explore stability in hyperstatic soft sphere packings, spring networks, non-spherical packings, and mean-field non-central-force potentials.
\end{abstract}

\maketitle

\section{Introduction}

A rigid structure is one which holds its shape when perturbed infinitesimally. If this structure consists of particles, this rigid structure is said to be jammed~\cite{stillinger_systematic_1964, liu_jamming_1998, torquato_random_2000, ohern_jamming_2003, torquato_jammed_2010, liu_jamming_2010, hecke_jamming_2009}. While the system as a whole may be rigid, local regions of it may still be unconstrained.  The particles---or clusters of particles---making up these locally unconstrained regions are generally termed ``rattlers''~\cite{stillinger_systematic_1964, speedy_random_1998, rattlerHistory} and are removed from the consideration of the structure for many analyses. 

The rigorous rattler detection scheme in the literature~\cite{donev_linear_2004} relies on linear programming and is both computationally expensive and lacks a simple geometric interpretation. Another, based on an event-driven packing protocol, gives direct physical meaning to rattler detection by using a stability analysis to systematically prune compressive forces, leaving rattlers fully unconstrained~\cite{lerner_simulations_2013}. However, this method scales poorly with system size and dimension, as it requires matrix inversion. These methods are, however, exact, and the resulting stable networks which they find are identical. In light of the complexity of these algorithms, a naive rattler detection scheme via constraint counting has proliferated and been used widely as a proxy, despite its shortcomings. The naive algorithm exploits the fact that the minimum number of constraints necessary to stabilize a particle in $d$ dimensions is $d+1$.  Thus, the number of contacts on each particle is counted, and those with fewer than $d+1$ contacts are deemed rattlers. Some (but not all) of these proxy methods apply this criterion recursively, thus more closely approximating the true stable network. However, this method cannot account for the presence of particles with at least $d+1$ stable contacting neighbors which are nevertheless not geometrically constrained.

Here, we present an alternative scheme for identifying rattlers that is intuitive, efficient, and physically meaningful. In fact, we have been using it for some time without realizing that it was not yet present in the literature~\cite{
corwin_bond_2013,
morse_geometric_2014, 
charbonneau_jamming_2015, 
charbonneau_universal_2016,
morse_geometric_2016, 
morse_hidden_2016, 
morse_echoes_2017, 
charbonneau_glassy_2019,
hagh_broader_2019, 
sartor_direct_2020,
dennis_jamming_2020,
morse_differences_2020, 
richard_predicting_2020, 
morse_direct_2021, 
rissone_longrange_2021,
sartor_meanfield_2021,
charbonneau_memory_2021,
charbonneau_finitesize_2021,
hagh_transient_2022,
stanifer_avalanche_2022, 
sartor_predicting_2022, 
dennis_emergence_2022,
charbonneau_jamming_2023}. 
Our method is based on a fundamental link between local rigidity and the local geometry of force carrying contacts, and implemented through the computation of the convex hull of the set of contacting particles. The stable network obtained by this algorithm is identical to that found in Refs.~\cite{donev_linear_2004, lerner_simulations_2013}.

The central thrust of our algorithm is based on a comment within Ref.~\cite{donev_linear_2004}, namely that a sphere can only be locally rigid if it has greater than $d+1$ non-cohemispheric contacts. While the authors of Ref.~\cite{donev_linear_2004} note that simple constructions can be done in low spatial dimensions (a method adopted in Refs.~\cite{wentworth-nice_structured_2020, zhang_jammed_2022}), ours is a dimensionally independent construction: a particle whose center is $\mathbf{r}_0$ is locally stable if the sum of all forces acting on it is zero, and if the surface of the convex hull of the particle's center and the centers of all of its contacting neighbors $\{\mathbf{r}_i\}$ does not include $\mathbf{r}_0$, i.e. $\mathbf{r}_0 \notin \partial \mathrm{Conv}(\mathbf{r}_0, \{\mathbf{r}_i\})$, where $\partial \mathrm{Conv}$ is the surface of the convex hull. We also prove a related theorem, which can be shown to be equivalent to this, which states that a particle is locally stable if the maximum inscribed sphere of its radical Voronoi cell is unique and identical to the particle itself.

The rest of this article is structured as follows. In Sec.~\ref{sec:definitions}, we provide definitions for the generalized packing models that we can consider and a series of mathematical definitions which will allow us to prove the two main theorems. In Sec.~\ref{sec:proofs}, we provide a formal proof that each construction finds the correct stable network. In Sec.~\ref{sec:complexity}, we address computational complexity, noting that even in the worst case scenario, the convex hull algorithm is faster than the linear programming algorithm in $d<6$. We conclude in Sec.~\ref{sec:conclusion} by discussing extensions of this construction to other models.

\section{Definitions}
\label{sec:definitions}

In the following, bold letters denote vectors in $\mathbb{R}^d$, $\mathbf{0}$ represents the zero-vector, $\mathbf{a}\cdot\mathbf{b}$ denotes the dot product between vectors $\mathbf{a}$ and $\mathbf{b}$, $\{\mathbf{r}_i\}$ denotes a finite set of points, where each point is represented by a vector from the origin, and $\{\mathbf{r}_i\} \setminus \mathbf{r}_0$ denotes the set $\{\mathbf{r}_i\}$ excluding the point $\mathbf{r}_0$. All definitions assume the standard Euclidean distance metric on $\mathbb{R}^d$, where the distance between points $\mathbf{a}$ and $\mathbf{b}$ is denoted $|\mathbf{a}-\mathbf{b}|$. To define our packing, and to aid in later definitions and theorems, we define both open and closed balls.

\begin{definition}
An open ball of radius $\sigma$ around $\mathbf{s}$ is defined as the set of points contained within a distance $\sigma$ of $\mathbf{s}$. The notation we will use is ${B_\sigma (\mathbf{s}) \equiv \{\mathbf{y} : |\mathbf{s}-\mathbf{y}| < \sigma\}}$.
\end{definition}

\begin{definition}
A closed ball of radius $\sigma$ around $\mathbf{s}$ is defined as the set of points contained within and including a distance $\sigma$ of $\mathbf{s}$. The notation we will use is ${\overline{B}_\sigma (\mathbf{s}) \equiv \{\mathbf{y} : |\mathbf{s}-\mathbf{y}| < \sigma\}}$.
\end{definition}

We thus consider particles defined by $\overline{B}_{\sigma_i}(\mathbf{r}_i)$ with a non-dimensional overlap between particles $i$ and $j$ defined as
\begin{equation}
h_{ij} \equiv 1-\frac{\abs{\mathbf{r}_i - \mathbf{r}_j}}{\sigma_i + \sigma_j},
\end{equation}
subject to an additive potential $U = \sum_{ij}u(h_{ij})$ where contacts ($h_{ij}\ge0)$ coincide with the potential cutoff, i.e. $u(h_{ij}\le0)=0$. This form includes (but is not limited to) standard soft-sphere contact power law potentials where $u(h_{ij}>0) \propto h_{ij}^\gamma$ for $\gamma > 0$ ($\gamma = 2$ for Hookean spheres, and $\gamma = 2.5$ for Hertzian spheres) and hard spheres, where $u(h_{ij}>0) = \infty$. 

From this, the force on particle $i$ from particle $j$ can be defined as 
\begin{equation}
\mathbf{f}_{ij} \equiv \nabla u(h_{ij}) = \abs{\nabla u(h_{ij})}\frac{\mathbf{r}_j-\mathbf{r}_i}{\abs{\mathbf{r}_j-\mathbf{r}_i}}.
\end{equation}
Here the only salient feature is that the force points towards the particle center from the point of contact. Unless otherwise mentioned, we consider only packings which are in a local energy minimum, such that the sum of forces acting on each particle is zero. Extensions to non energy minimized packings will be considered in Sec.~\ref{sec:conclusion}.

\begin{definition}[Adapted from Ref.~\cite{donev_linear_2004}]
A particle is locally stable if the sum of the forces acting on it is zero and the forces acting on it span $\mathbb{R}^d$. Particles which are not locally stable are called unstable.
\label{def:stable}
\end{definition}

In an effort to make this work as self contained as possible, we have compiled a list of the mathematical definitions necessary to follow the theorems and proofs of Sec.~\ref{sec:proofs} such that only basic knowledge of set theory and linear algebra will be prerequisite. The definitions are adapted from Refs.~\cite{ziegler_lectures_1995, munkres_topology_2000, grunbaum_convex_2003}.

\begin{figure}[htp!]
\includegraphics[width=0.5\linewidth]{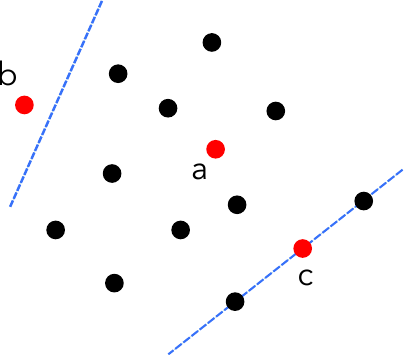}
\caption{Here we demonstrate the concept of an extreme point by examining three red particles labelled (a-c). While this example is embedded in $d=2$, the demonstration extends naturally to higher dimensions, replacing lines with $(d-1)$-planes. (a) No line can be drawn which separates the particle from all other particles, so (a) is not an extreme point. (b) A line can be drawn which separates the particle from all other points, and it is thus an extreme point and will be shown to be on the surface of the convex hull. (c) No line can be drawn which separates the particle from all other particles, so it is not an extreme point. However, a line exists which contains the particle and which divides space such that all particles exist (inclusively) in one of its half spaces, thus the point is on the surface of the convex hull.}
\label{fig:extremalPoints}
\end{figure}

\begin{definition}
An extreme point $\mathbf{r}_0$ of the finite set $\{\mathbf{r}_i\}$ is a point which can be separated from all other points by a $(d-1)$-plane. Thus there exists a vector $\mathbf{a}\in\mathbb{R}^d$ with at least one non-zero element and $b\in\mathbb{R}$ for which ${\mathbf{a}\cdot\mathbf{r}_0 - b > 0}$ while ${\mathbf{a}\cdot\mathbf{r}_j - b \le 0}$ for all ${\mathbf{r}_j \in \{\mathbf{r}_i\} \setminus \mathbf{r}_0}$. An illustration of both extreme and non-extreme points is given in Fig.~\ref{fig:extremalPoints}. \label{def:extremePoint}
\end{definition}

\noindent Remark: In our proofs, we only need the extreme points of finite sets. The concept of an extreme point can of course be generalized to infinite sets~\cite{munkres_topology_2000}, but this makes several of the theorems unwieldy. The definition used here is non-standard but reduces to the common definition in the case of finite sets.

\begin{figure}[htp!]
\includegraphics[width=0.5\linewidth]{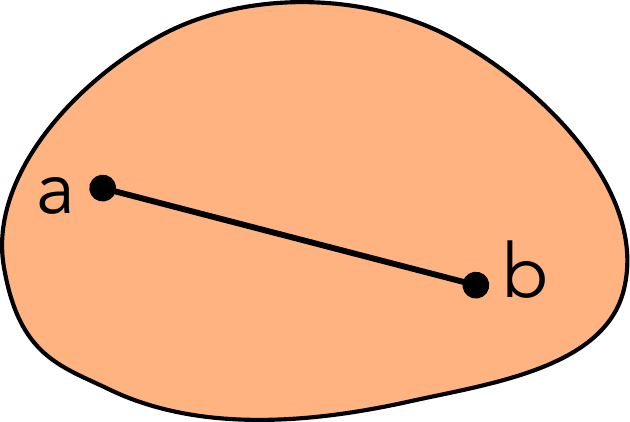}
\caption{A simple illustration that a convex set containing points $\mathbf{a}$ and $\mathbf{b}$ contains all points on a straight line between them.}
\label{fig:convexPoints}
\end{figure}

\begin{definition}
A set $K \subset \mathbb{R}^d$ is convex if for all $\mathbf{a}, \mathbf{b} \in K$,  $\mathbf{c} = (t-1)\mathbf{a} + t\mathbf{b} \in K$ for all $t \in [0,1]$. Put simply, if $\mathbf{a}$ and $\mathbf{b}$ are in $K$, then $K$ is convex if every point $\mathbf{c}$ along the straight line between $\mathbf{a}$ and $\mathbf{b}$ is also in $K$. This is illustrated in Fig.~\ref{fig:convexPoints}.
\label{def:convex}
\end{definition}

\begin{definition}
From Ref.~\cite{grunbaum_convex_2003}, a compact convex set ${K \subset \mathbb{R}^d}$ is a convex polytope if the extreme points of $K$ form a finite set. In this work, all instances of the word polytope are implied to be convex.
\end{definition}

\begin{definition}
The surface $\partial K$ of a polytope $K$ is defined as the infinite set of points $\mathbf{s} \in K$ for which there exists $\mathbf{s}_\mathrm{out} \in B_\sigma (\mathbf{s})$ where $\mathbf{s}_\mathrm{out} \notin K$ for all $\sigma$.
\end{definition}

\begin{figure}[htp!]
\includegraphics[width=0.5\linewidth]{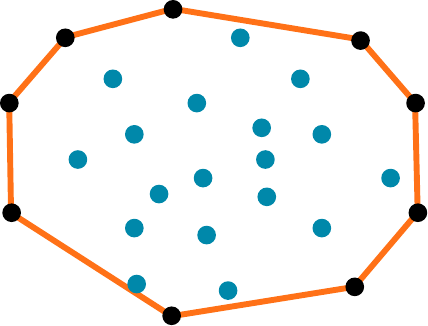}
\caption{Here we demonstrate the convex hull (orange) of a set of points. Points on the surface of the convex hull are colored black, while points not on the surface of the convex hull are in teal.}
\label{fig:convHullExample}
\end{figure}

\begin{definition}
The convex hull of a set of points $\mathrm{Conv}(\{\mathbf{r}_i\})$ is the unique closed $d$-dimensional polytope containing all points $\{\mathbf{r}_i\}$ whose vertices are members of $\{\mathbf{r}_i\}$. The surface of the convex hull is denoted $\partial \mathrm{Conv}(\{\mathbf{r}_i\})$ and is shown visually in Fig.~\ref{fig:convHullExample}. 
\end{definition}

\begin{definition}
For a sphere given by $\overline{B}_\sigma(\mathbf{r})$, the points $\{\mathbf{b}_i\} \subset \partial \overline{B}_\sigma(\mathbf{r})$ are cohemispheric if there exists $\mathbf{a} \in \mathbb{R}^d$ with at least one non-zero element, where $\mathbf{a} \cdot (\mathbf{b}_i - \mathbf{r}) \ge 0$ for all $i$. Similarly, forces $\{\mathbf{f}_i\}$ are cohemispheric if $\mathbf{a} \cdot \mathbf{f}_i \ge 0$ for all $i$. If no such $\mathbf{a}$ exists, the points or forces are non-cohemispheric. 
\end{definition}

\begin{figure}[htp!]
\includegraphics[width=0.9\linewidth]{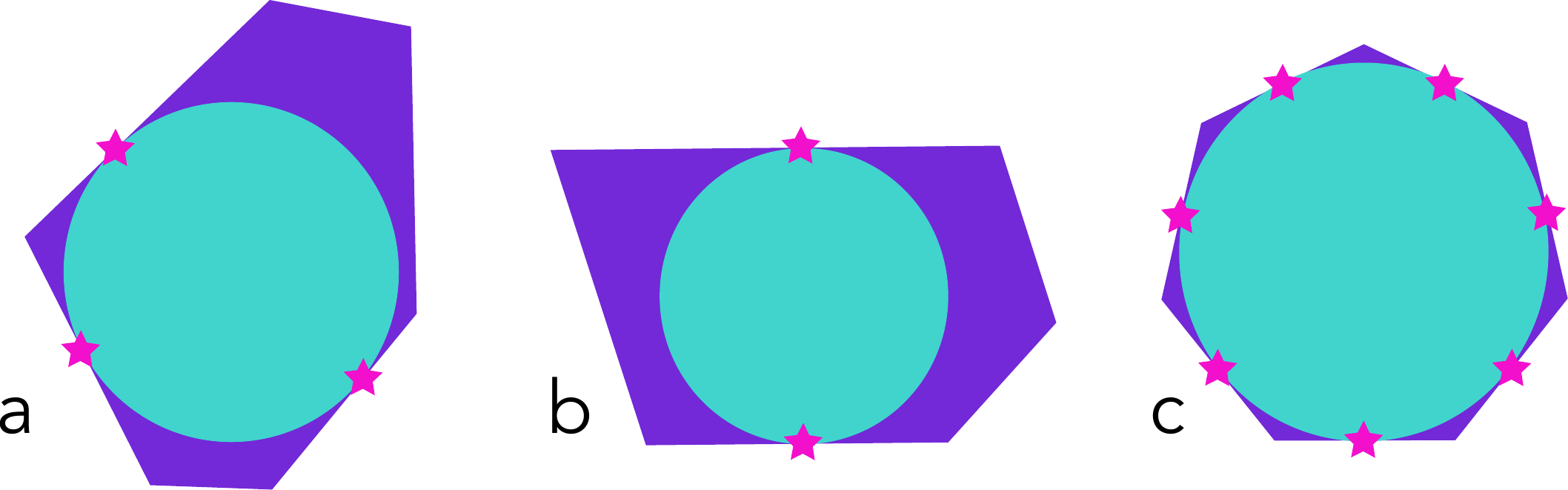}
\caption{The maximum inscribed sphere (teal) of a convex polytope (purple) in $d=2$. Contact points between the MIS and the polytope are shown with stars. (a) The generic case with no symmetries has $d+1$ contact points between the polytope and the MIS. (b) When two of the contacting surfaces are parallel, it is possible to have an MIS with only 2 contacts in any dimension. (c) Highly symmetric polytopes (regular ones, as shown here, or those near jamming), may have MIS which have greater than $d+1$ contacts with the polytope.}
\label{fig:misdiagram}
\end{figure}

\begin{definition}
For a polytope $K$, the maximum inscribed sphere $M(K)$ is the largest closed ball fully contained in $K$. That is, $M(K) = \mathrm{max}_\sigma [\overline{B}_\sigma(\mathbf{r}): \overline{B}_\sigma(\mathbf{r}) \subset K]$. An illustration of the concept, including generic, degenerate, and highly symmetric cases is given in Fig.~\ref{fig:misdiagram}. We use MIS as an abbreviation when not referring to a specific $M(K)$.
\end{definition}

\begin{definition}
In a packing of particles with positions $\{\mathbf{r}_i\}$, the Voronoi cell of particle 0 is the set $V(\mathbf{r}_0) = \{\mathbf{y} : |\mathbf{y} - \mathbf{r}_0| \le |\mathbf{y} - \mathbf{r}_i| \,\,\forall i\}$.
\end{definition}

\begin{definition}
The power of a point $\mathbf{c} \in \mathbb{R}^d$ with respect to a sphere with center $\mathbf{r}$ and radius $\sigma$ is given by $\Pi_{\mathbf{r},\sigma}(\mathbf{c}) = |\mathbf{r} - \mathbf{c}|^2 - \sigma^2$. Points on the interior of the sphere have negative power, points on the surface of the sphere have zero power, and points outside of the sphere have positive power.
\label{def:power}
\end{definition}

\begin{definition}
In a packing of particles with positions $\{\mathbf{r}_i\}$ and radii $\sigma_i$, the radical Voronoi cell of particle 0 is the set $R(\mathbf{r}_0) = \{\mathbf{y} : \Pi_{\mathbf{r}_0,\sigma_0}(\mathbf{y}) \le \Pi_{\mathbf{r}_i,\sigma_i}(\mathbf{y}) \,\,\forall i\}$.
\end{definition}

Trivially, we see that if all particles are the same size (i.e. $\sigma_i = \sigma$ for all $i$), then the radical Voronoi cell reduces to that of the standard Voronoi cell. Both the radical Voronoi cell and, by extension, the Voronoi cell are convex polytopes, and it is from the definitions that these cells tessellate space, i.e. there is no point in space which is not contained in the radical Voronoi cell of a particle, and the only points which can be contained in multiple radical Voronoi cells are on the shared surfaces of two or more cells.

\section{Proofs of the stability theorems}
\label{sec:proofs}

In this section, we provide proofs of the two main stability theorems, labelled Theorem~\ref{theor:mainCH} (Sec.~\ref{sec:chProof}) and Theorem~\ref{theor:mainRV} (Sec.~\ref{sec:rvProof}). While Sec.~\ref{sec:chProof} is entirely self contained, Sec.~\ref{sec:rvProof} uses theorems from Sec.~\ref{sec:chProof}. Some of the theorems are elementary or have been proven by simpler means elsewhere, but we formulate our own versions here, as we believe that they help to build the physical intuition for the main theorems.

\subsection{Stability via the convex hull}
\label{sec:chProof}

\begin{theorem}
[The Krein-Milman theorem~\cite{krein_extreme_1940}]
A compact convex subset of a Hausdorff locally convex topological vector space is equal to the closed convex hull of its extreme points.
\end{theorem}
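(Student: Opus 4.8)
The plan is to prove the theorem in two stages: first establishing that every nonempty compact convex set possesses at least one extreme point, and then showing that the closed convex hull of all extreme points exhausts $K$. Throughout, the two essential tools will be the Hahn--Banach separation theorem---available precisely because the space is \emph{locally convex}, with the Hausdorff condition ensuring that continuous linear functionals separate distinct points---together with Zorn's lemma. I would begin by generalizing the notion of an extreme point to that of an extreme subset, or \emph{face}: a nonempty closed set $F \subseteq K$ such that whenever a point of $F$ lies strictly between two points of $K$, both of those points already lie in $F$. Extreme points are then exactly the singleton faces, and I would record the small transitivity fact that an extreme point of a face $F$ of $K$ is automatically an extreme point of $K$ itself.

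For the existence stage, I would partially order the collection of nonempty closed faces of $K$ by reverse inclusion and check that every chain has an upper bound: the intersection of a chain of closed faces is again a face, is closed, and is nonempty by the finite intersection property, since $K$ is compact and each face is closed (hence compact). Zorn's lemma then supplies a minimal face $M$, and the crux is to argue that $M$ is a single point. If $M$ contained two distinct points, Hahn--Banach would produce a continuous linear functional $f$ taking different values on them; the subset of $M$ on which $f$ attains its maximum is nonempty by compactness, is a face, and is strictly smaller than $M$, contradicting minimality. This minimality argument is the step I expect to be the \textbf{main obstacle}, since it requires verifying both that the maximizing subset of a face is again a face and that the separating functional genuinely shrinks $M$.

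For the second stage, set $C = \overline{\mathrm{Conv}}(\mathrm{ext}(K))$. Because $K$ is closed, convex, and contains every extreme point, $C \subseteq K$ is immediate. Supposing toward a contradiction that some $x_0 \in K \setminus C$, I would apply Hahn--Banach separation---$C$ being closed and convex and $\{x_0\}$ compact---to obtain a continuous linear functional $f$ with $f(x_0) > \sup_{c \in C} f(c)$. Applying the existence result of the first stage to the face $K_f = \{x \in K : f(x) = \max_K f\}$ yields an extreme point $e$ of $K_f$, which by transitivity is an extreme point of $K$ and so lies in $C$. But then $f(e) = \max_K f \ge f(x_0) > \sup_C f \ge f(e)$, a contradiction; hence $C = K$, which is the claimed equality.
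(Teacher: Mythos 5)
The paper does not actually prove this theorem: it is quoted as a known result and the reader is referred to the original paper of Krein and Milman, with the only substantive remark being that $\mathbb{R}^d$ with its standard structure qualifies as a Hausdorff locally convex topological vector space. Your outline is therefore not comparable to anything in the paper, but it is the standard modern proof (the face/Zorn's-lemma argument found in most functional analysis texts), and its essential steps are all correctly identified: the two invocations of Hahn--Banach are exactly where local convexity and the Hausdorff property enter, the finite intersection property is the right way to get upper bounds for chains of closed faces, and the step you flag as the main obstacle---showing that a minimal face must be a singleton by cutting it down with the argmax set of a separating functional---is indeed the crux and works as you describe (the argmax set of a continuous linear functional on a compact convex set is a face, and it omits at least one of the two separated points, so it is strictly smaller). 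Two small points to tidy in a full write-up: you should dispose of the empty case separately and note that compact subsets of a Hausdorff space are closed (you use $K$ closed when asserting $C \subseteq K$); and the claim that an intersection of a chain of faces is again a face deserves its one-line verification. For the purposes of this paper none of this machinery is needed, since the authors only ever apply the theorem to finite point sets in $\mathbb{R}^d$, where a far more elementary induction on dimension would suffice; your argument buys the full generality of the stated theorem at the cost of Zorn's lemma and the Hahn--Banach theorem.
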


The proof of this theorem is given in Ref.~\cite{krein_extreme_1940}. For the purposes of this work, we will use the fact that the standard vector space on $\mathbb{R}^d$ with the Euclidean distance metric and standard inner product is a Hausdorff locally convex topological vector space. For clarification of these terms, we suggest any standard textbook on topology (for example, Ref.~\cite{munkres_topology_2000}).

\begin{corollary}
The convex hull of a set of points $\mathrm{Conv}(\{\mathbf{r}_i\})$ is equal to the closed $d$-dimensional polytope whose vertices are the extreme points of $\{\mathbf{r}_i\}$.
\label{cor:convVertPoly}
\end{corollary}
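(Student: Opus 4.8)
The plan is to combine the just-stated Krein--Milman theorem with a separating-hyperplane argument, so that the vertices of $\mathrm{Conv}(\{\mathbf{r}_i\})$ are identified with the extreme points of the finite set $\{\mathbf{r}_i\}$ in the sense of Definition~\ref{def:extremePoint}. Writing $S = \{\mathbf{r}_i\}$ and $K = \mathrm{Conv}(S)$, I would first observe that $K$ is compact and convex: it is convex by Definition~\ref{def:convex}, bounded because $S$ is finite and hence sits inside some ball, and closed by the definition of the convex hull, so it is compact. Krein--Milman then applies to $\mathbb{R}^d$ and gives $K = \overline{\mathrm{Conv}}(\mathrm{ext}\,K)$, where $\mathrm{ext}\,K$ is the set of extreme points of $K$ in the usual segment sense. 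Because $S$ is finite, $\mathrm{ext}\,K$ is finite and its convex hull is already closed, so the closure is superfluous and $K = \mathrm{Conv}(\mathrm{ext}\,K)$.

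The core of the argument is then to show that $\mathrm{ext}\,K$ coincides with the set of extreme points of $S$ under Definition~\ref{def:extremePoint}. First I would note $\mathrm{ext}\,K \subseteq S$: every point of $K$ is a convex combination of members of $S$, and any point of $K \setminus S$ must have at least two members of $S$ carrying positive weight, hence is a nontrivial convex combination and cannot be extreme. It therefore suffices to decide, for a fixed $\mathbf{r}_0 \in S$, when $\mathbf{r}_0 \in \mathrm{ext}\,K$.

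For the reverse implication, if $\mathbf{r}_0$ satisfies the separation condition with $\mathbf{a}\cdot\mathbf{r}_0 > b \ge \mathbf{a}\cdot\mathbf{r}_j$ for every other generator, then for any $\mathbf{x} = \sum_i \lambda_i \mathbf{r}_i \in K$ one has $\mathbf{a}\cdot\mathbf{x} = \sum_i \lambda_i(\mathbf{a}\cdot\mathbf{r}_i) \le \mathbf{a}\cdot\mathbf{r}_0$ with equality only when all weight sits on $\mathbf{r}_0$; thus the linear functional $\mathbf{x}\mapsto\mathbf{a}\cdot\mathbf{x}$ is uniquely maximized over $K$ at $\mathbf{r}_0$, which cannot then lie in the interior of any segment of $K$, so $\mathbf{r}_0 \in \mathrm{ext}\,K$. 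The forward implication is where the only genuine work lies, and it is the step I expect to be the main obstacle. Here I would use that $\mathbf{r}_0 \in \mathrm{ext}\,K$ forces $\mathbf{r}_0 \notin \mathrm{Conv}(S \setminus \mathbf{r}_0)$, since otherwise $\mathbf{r}_0$ would be a nontrivial convex combination of the remaining generators, contradicting extremity. The set $\mathrm{Conv}(S \setminus \mathbf{r}_0)$ is compact, convex, and does not contain $\mathbf{r}_0$, so the separating-hyperplane theorem for a point disjoint from a compact convex set produces $\mathbf{a}$ and $b$ strictly separating $\mathbf{r}_0$ from it, which is exactly the condition of Definition~\ref{def:extremePoint}.

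Combining the two implications, $\mathrm{ext}\,K$ equals the set of extreme points of $S$, and by the first paragraph $K = \mathrm{Conv}(\mathrm{ext}\,K)$ is precisely the closed $d$-dimensional polytope whose vertices are the extreme points of $\{\mathbf{r}_i\}$. The one subtlety to verify is the case illustrated in Fig.~\ref{fig:extremalPoints}(c): a generator lying on a face but not at a vertex fails strict separation, and the separating-hyperplane step excludes exactly such points, since any such generator does belong to $\mathrm{Conv}(S \setminus \mathbf{r}_0)$.
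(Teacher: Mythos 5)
Your proof is correct and follows the same route as the paper's: both rest on the Krein--Milman theorem applied to $\mathrm{Conv}(\{\mathbf{r}_i\})$ viewed as a compact convex subset of $\mathbb{R}^d$. The paper's version is a one-line appeal to Krein--Milman that silently identifies the segment-sense extreme points of the polytope with the separation-based extreme points of Definition~\ref{def:extremePoint}; your separating-hyperplane argument supplies exactly the bridge that the paper only asserts in the remark following that definition, so you have written out in full the step the paper leaves implicit.
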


\begin{proof}
Given that the standard vector space on $\mathbb{R}^d$ is a Hausdorff locally convex topological vector space, the Krein-Milman theorem states that a closed convex polytope is the convex hull of its extreme points, which for a convex polytope are its vertices.
\end{proof}

\begin{corollary}
If $\mathbf{r}_0$ is an extreme point of $\{\mathbf{r}_i\}$ then $\mathbf{r}_0 \in \partial \mathrm{Conv}(\{\mathbf{r}_i\})$.
\label{cor:exOnHull}
\end{corollary}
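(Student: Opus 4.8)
The plan is to establish two things: that $\mathbf{r}_0$ lies in $\mathrm{Conv}(\{\mathbf{r}_i\})$ and that it lies on the surface $\partial\mathrm{Conv}(\{\mathbf{r}_i\})$. The first is immediate, since by definition the convex hull contains every point of $\{\mathbf{r}_i\}$ and hence $\mathbf{r}_0\in\mathrm{Conv}(\{\mathbf{r}_i\})$. All the content is therefore in verifying the surface condition: that every open ball $B_\sigma(\mathbf{r}_0)$ contains a point $\mathbf{s}_\mathrm{out}\notin\mathrm{Conv}(\{\mathbf{r}_i\})$.

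To do this I would exploit the separating vector supplied by the extreme-point hypothesis. Because $\mathbf{r}_0$ is an extreme point, Definition~\ref{def:extremePoint} furnishes an $\mathbf{a}\in\mathbb{R}^d$ and $b\in\mathbb{R}$ with $\mathbf{a}\cdot\mathbf{r}_0 > b$ and $\mathbf{a}\cdot\mathbf{r}_j \le b$ for every $\mathbf{r}_j\in\{\mathbf{r}_i\}\setminus\mathbf{r}_0$. In particular $\mathbf{a}\cdot\mathbf{r}_j \le b < \mathbf{a}\cdot\mathbf{r}_0$ for all $j\neq 0$, so the closed half-space $H = \{\mathbf{x} : \mathbf{a}\cdot\mathbf{x}\le\mathbf{a}\cdot\mathbf{r}_0\}$ is a convex set containing every point of $\{\mathbf{r}_i\}$. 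Since each point of the hull is a convex combination of the $\mathbf{r}_i$ (its vertices, by Corollary~\ref{cor:convVertPoly}), the functional $\mathbf{x}\mapsto\mathbf{a}\cdot\mathbf{x}$ cannot exceed $\mathbf{a}\cdot\mathbf{r}_0$ anywhere on $\mathrm{Conv}(\{\mathbf{r}_i\})$; equivalently $\mathrm{Conv}(\{\mathbf{r}_i\})\subseteq H$, and $\mathbf{r}_0$ maximizes $\mathbf{a}\cdot\mathbf{x}$ over the hull.

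With this bound in hand, the escaping point is obtained by stepping out along $\mathbf{a}$. For any $\epsilon>0$ the point $\mathbf{s}_\mathrm{out} = \mathbf{r}_0 + \epsilon\mathbf{a}$ satisfies $\mathbf{a}\cdot\mathbf{s}_\mathrm{out} = \mathbf{a}\cdot\mathbf{r}_0 + \epsilon\abs{\mathbf{a}}^2 > \mathbf{a}\cdot\mathbf{r}_0$, hence $\mathbf{s}_\mathrm{out}\notin H$ and therefore $\mathbf{s}_\mathrm{out}\notin\mathrm{Conv}(\{\mathbf{r}_i\})$. Given any $\sigma>0$, choosing $\epsilon < \sigma/\abs{\mathbf{a}}$ places $\mathbf{s}_\mathrm{out}$ inside $B_\sigma(\mathbf{r}_0)$ while keeping it outside the hull, which is exactly the surface condition. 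This yields $\mathbf{r}_0\in\partial\mathrm{Conv}(\{\mathbf{r}_i\})$.

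The step I expect to be the main obstacle is the half-space containment $\mathrm{Conv}(\{\mathbf{r}_i\})\subseteq H$, since it is the only place where the geometric definition of the hull must be connected to the algebraic separating inequality. I would handle it via the convex-combination representation of hull points together with Corollary~\ref{cor:convVertPoly}, noting that $\mathbf{a}\cdot(\sum_k\lambda_k\mathbf{r}_k) = \sum_k\lambda_k(\mathbf{a}\cdot\mathbf{r}_k) \le \mathbf{a}\cdot\mathbf{r}_0$ whenever $\lambda_k\ge 0$ and $\sum_k\lambda_k = 1$. Everything else reduces to a routine ball-radius estimate.
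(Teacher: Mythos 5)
Your proof is correct and follows essentially the same route as the paper's: both step out from $\mathbf{r}_0$ along the separating vector $\mathbf{a}$ by an amount small enough to stay inside $B_\sigma(\mathbf{r}_0)$ and then argue that the resulting point lies outside the hull. The only cosmetic difference is that you justify the exclusion via the half-space containment $\mathrm{Conv}(\{\mathbf{r}_i\})\subseteq H$ and convex combinations, whereas the paper observes that the displaced point is an extreme point of the augmented set and invokes Corollary~\ref{cor:convVertPoly}; your version is, if anything, slightly more explicit about why the hull cannot reach past the separating hyperplane.
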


\begin{proof} 
To prove that $\mathbf{r}_0 \in \partial \mathrm{Conv}(\{\mathbf{r}_i\})$, we must show that there exists a point $\mathbf{s}_\mathrm{out} \in B_\sigma(\mathbf{r}_0)$ such that $\mathbf{s}_\mathrm{out} \notin \mathrm{Conv}(\{\mathbf{r}_i\})$. Because $\mathbf{r}_0$ is an extreme point, there exists $\mathbf{a} \in \mathbb{R}^d$ with at least one non-zero element and $b \in \mathbb{R}$ such that $\mathbf{a} \cdot \mathbf{r}_0 - b > 0$ while $\mathbf{a} \cdot \mathbf{r}_j - b \le 0$ for all $\mathbf{r}_j \in \{\mathbf{r}_i\} \setminus \mathbf{r}_0$. We can thus construct $\mathbf{s}_\mathrm{out} = \mathbf{r}_0 + \frac{\sigma\mathbf{a}}{2\abs{\mathbf{a}}}$, for which $|\mathbf{r}_0 - \mathbf{s}_\mathrm{out}| = \frac{\sigma}{2}$, and thus $\mathbf{s}_\mathrm{out} \in B_\sigma(\mathbf{r}_0)$. By construction, $\mathbf{s}_\mathrm{out}$ is an extreme point of the set $\{\mathbf{s}_\mathrm{out}, \mathbf{r}_i\}$, and thus by Corollary~\ref{cor:convVertPoly}, $\mathbf{s}_\mathrm{out} \notin \mathrm{Conv}(\mathbf{r}_i)$. This statement is true for any value of $\sigma$, and thus $\mathbf{r}_0 \in \partial \mathrm{Conv}(\{\mathbf{r}_i\})$.
\end{proof}

\begin{theorem}
A full dimensional convex polytope is equivalently defined by either its vertices (V-Representation) or the intersection of half-planes representing its surface (H-Representation). 
\label{theor:hrepvrep}
\end{theorem}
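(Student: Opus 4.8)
The plan is to prove the two directions of the equivalence separately: that every polytope presented as the convex hull of a finite vertex set (the V-representation) can be written as a bounded intersection of finitely many closed half-spaces (the H-representation), and conversely that every bounded intersection of finitely many closed half-spaces is the convex hull of a finite set of points. Throughout I would lean on the separating-hyperplane property of closed convex sets and on the extreme-point machinery already established via the Krein-Milman theorem and Corollary~\ref{cor:convVertPoly}.

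For the direction from V to H, let $K = \mathrm{Conv}(\{\mathbf{r}_i\})$ be full dimensional. First I would record the general fact that a closed convex set equals the intersection of all closed half-spaces containing it, an immediate consequence of the separating-hyperplane theorem since any point outside $K$ is separated from $K$ by a hyperplane and thus excluded by the corresponding half-space. The work is then to reduce this \emph{a priori} infinite intersection to a finite one. I would define a facet as a maximal proper face of $K$, namely the intersection of $K$ with a supporting hyperplane, and argue that every proper face is the convex hull of a subset of the extreme points of $\{\mathbf{r}_i\}$ by Corollary~\ref{cor:convVertPoly}. Since there are finitely many extreme points, there are finitely many faces, and in particular finitely many facets $F_1,\dots,F_m$, each lying on a supporting half-space $H_k^{-}$ with $K \subset H_k^{-}$. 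The remaining step, which I expect to be the crux, is to show $K = \bigcap_{k=1}^m H_k^{-}$. The inclusion $K \subseteq \bigcap_k H_k^{-}$ is immediate; for the reverse I would use full dimensionality to fix an interior point $\mathbf{y}$ of $K$ and argue by contradiction. If some $\mathbf{x} \in \bigcap_k H_k^{-}$ lay outside $K$, the segment from $\mathbf{y}$ to $\mathbf{x}$ would cross $\partial K$ at a point $\mathbf{z}$; since every boundary point of a full-dimensional polytope lies on a facet, $\mathbf{z}$ belongs to some $F_k$, and the supporting inequality is tight at $\mathbf{z}$ but strict at the interior point $\mathbf{y}$, forcing it to be violated at $\mathbf{x}$, which lies beyond $\mathbf{z}$, contradicting $\mathbf{x} \in H_k^{-}$.

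For the direction from H to V, let $P = \bigcap_{k=1}^m \{\mathbf{x} : \mathbf{a}_k \cdot \mathbf{x} \le b_k\}$ be nonempty and bounded. As a finite intersection of closed convex sets it is closed and convex, and boundedness makes it compact, so by the Krein-Milman theorem $P$ equals the closed convex hull of its extreme points. It then suffices to show that $P$ has only finitely many extreme points. I would argue that a point $\mathbf{v} \in P$ is extreme precisely when the normals $\mathbf{a}_k$ of the constraints active at $\mathbf{v}$ (those satisfied with equality) span $\mathbb{R}^d$: if they span, $\mathbf{v}$ is the unique solution of $d$ linearly independent active equalities and so cannot be written as a nontrivial convex combination of other points of $P$, whereas if they fail to span there is a direction along which $\mathbf{v}$ can be perturbed both ways inside $P$, exhibiting it as a midpoint. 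Each extreme point is therefore pinned down by selecting $d$ linearly independent rows from the $m$ constraints, giving at most $\binom{m}{d}$ extreme points, a finite set whose convex hull is $P$.

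The principal obstacle is the sufficiency step in the V-to-H direction: establishing that the facet half-spaces alone cut out $K$. Finiteness of the facets is clean once faces are identified with convex hulls of subsets of extreme points, but the line-crossing argument rests on the delicate supporting claim that $\partial K$ is exactly the union of the facets, equivalently that every proper face of a full-dimensional polytope is contained in a facet; I would need to prove this carefully, since it is what guarantees the crossing point $\mathbf{z}$ lands on a genuine $(d-1)$-dimensional face rather than on a lower-dimensional one with no separating inequality available. The H-to-V direction is comparatively routine, resting only on compactness together with the Krein-Milman theorem and the linear-algebraic characterization of vertices.
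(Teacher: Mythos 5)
The paper does not actually prove this theorem; it defers to Theorems 3.1.1--3.1.2 of Gr\"unbaum and Theorem 1.1 of Ziegler. Your outline is a correct sketch of the standard geometric (Minkowski--Weyl) argument, and your H-to-V direction is complete in all essentials: the characterization of extreme points by active constraints whose normals span $\mathbb{R}^d$ is right, and the $\binom{m}{d}$ bound finishes it. Two remarks on the V-to-H direction. First, the step ``every proper face is the convex hull of a subset of the extreme points'' needs the small unstated lemma that an extreme point of a face $K\cap H$ (with $H$ supporting) is an extreme point of $K$ itself; this is easy but should be said, since Corollary~\ref{cor:convVertPoly} alone does not give it. Second, you have correctly identified the real crux --- that $\partial K$ is covered by the facets, i.e.\ that every proper face lies in a $(d-1)$-dimensional one --- and this is precisely where the cited texts expend their effort (typically by showing that a supporting hyperplane meeting $K$ in a face of dimension $<d-1$ can be rotated about that face to enlarge it, or by induction on dimension). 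As written, your proposal is a sound plan with that one step honestly left open rather than a finished proof. It is worth noting that Ziegler's own proof of the equivalence takes a genuinely different, more computational route via Fourier--Motzkin elimination (projecting a polyhedron onto a coordinate hyperplane stays a polyhedron), which avoids the facet-covering lemma entirely at the cost of losing the geometric picture; your route buys the identification of the minimal $H$-representation with the facets, which is the more informative statement for the purposes of this paper.
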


The proof of this theorem is contained in standard texts on convex polytopes, for example following the proofs of Theorems 3.1.1 and 3.1.2 of Ref.~\cite{grunbaum_convex_2003} or Theorem 1.1 of Ref.~\cite{ziegler_lectures_1995}. The theorem only applies to full dimensional polytopes (i.e. ones which are $d$-dimensional objects), but if the polytope is a $d'$ dimensional object, where $d'\neq d$, it is sufficient for our purposes to consider the V-Representation and the H-Representation in $\mathbb{R}^{d'}$, in which the polytope is full dimensional.

\begin{corollary}
A point $\mathbf{r}_0$ which is contained on a $(d-1)$-plane, which defines a halfspace containing all $\mathbf{r}_i$ is contained in $\partial \mathrm{Conv}(\{\mathbf{r}_i\})$. That is, if there exists $\mathbf{a}\in\mathbb{R}^d$ with at least one non-zero element and $b\in\mathbb{R}$ such that $\mathbf{a}\cdot\mathbf{r}_0 - b = 0$ and $\mathbf{a}\cdot\mathbf{r}_j - b \le 0$ for all $\mathbf{r}_j \in \{\mathbf{r}_i\} \setminus \mathbf{r}_0$, then $\mathbf{r}_0 \in \partial \mathrm{Conv}(\{\mathbf{r}_i\})$
\end{corollary}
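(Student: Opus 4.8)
The plan is to reuse, almost verbatim, the construction from the proof of Corollary~\ref{cor:exOnHull}, since the hypothesis here differs from that one only in that $\mathbf{r}_0$ sits exactly on the separating $(d-1)$-plane (so $\mathbf{a}\cdot\mathbf{r}_0 - b = 0$) rather than strictly on its positive side. Because $\mathbf{r}_0 \in \{\mathbf{r}_i\} \subseteq \mathrm{Conv}(\{\mathbf{r}_i\})$, the point already lies in the hull, so by the definition of $\partial K$ it suffices to exhibit, for every $\sigma$, a point $\mathbf{s}_\mathrm{out} \in B_\sigma(\mathbf{r}_0)$ with $\mathbf{s}_\mathrm{out} \notin \mathrm{Conv}(\{\mathbf{r}_i\})$.

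First I would take the same outward-displaced candidate $\mathbf{s}_\mathrm{out} = \mathbf{r}_0 + \frac{\sigma\mathbf{a}}{2|\mathbf{a}|}$, which satisfies $|\mathbf{s}_\mathrm{out} - \mathbf{r}_0| = \sigma/2 < \sigma$ and hence lies in $B_\sigma(\mathbf{r}_0)$, and I would evaluate $\mathbf{a}\cdot\mathbf{s}_\mathrm{out} = b + \tfrac{1}{2}\sigma|\mathbf{a}|$. The remaining task is to show $\mathbf{s}_\mathrm{out} \notin \mathrm{Conv}(\{\mathbf{r}_i\})$, which I would do by proving $\mathbf{s}_\mathrm{out}$ is an extreme point of the augmented set $\{\mathbf{s}_\mathrm{out}, \mathbf{r}_i\}$ and then invoking Corollary~\ref{cor:convVertPoly} exactly as in Corollary~\ref{cor:exOnHull}: a point that is a vertex of the augmented hull cannot also be expressible within the hull of the original points, so it is absent from $\mathrm{Conv}(\{\mathbf{r}_i\})$.

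The one genuine obstacle---and the reason this warrants a separate corollary rather than being folded into Corollary~\ref{cor:exOnHull}---is that the original plane $\mathbf{a}\cdot\mathbf{x} = b$ does \emph{not} strictly separate $\mathbf{s}_\mathrm{out}$ from the full set, because $\mathbf{r}_0$ meets it with equality. To repair this I would shift the offset outward to $b' = b + \tfrac{1}{4}\sigma|\mathbf{a}|$, chosen to sit strictly between $b$ and $\mathbf{a}\cdot\mathbf{s}_\mathrm{out}$. Then $\mathbf{a}\cdot\mathbf{s}_\mathrm{out} - b' = \tfrac{1}{4}\sigma|\mathbf{a}| > 0$, while for every $\mathbf{r}_j \in \{\mathbf{r}_i\}$ (now \emph{including} $\mathbf{r}_0$) one has $\mathbf{a}\cdot\mathbf{r}_j - b' \le b - b' = -\tfrac{1}{4}\sigma|\mathbf{a}| < 0$. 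Thus $(\mathbf{a}, b')$ certifies $\mathbf{s}_\mathrm{out}$ as an extreme point of $\{\mathbf{s}_\mathrm{out}, \mathbf{r}_i\}$ in the sense of Definition~\ref{def:extremePoint}, completing that step.

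Since $\sigma > 0$ was arbitrary, this produces an exterior point in every ball about $\mathbf{r}_0$, so $\mathbf{r}_0 \in \partial \mathrm{Conv}(\{\mathbf{r}_i\})$. As a cross-check I would note the shorter alternative: the closed halfspace $H = \{\mathbf{x} : \mathbf{a}\cdot\mathbf{x} \le b\}$ is convex and contains every $\mathbf{r}_i$, hence contains $\mathrm{Conv}(\{\mathbf{r}_i\})$, whereas $\mathbf{a}\cdot\mathbf{s}_\mathrm{out} = b + \tfrac{1}{2}\sigma|\mathbf{a}| > b$ places $\mathbf{s}_\mathrm{out}$ outside $H$ and therefore outside the hull. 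This route avoids the shifted-plane bookkeeping but relies on the ``smallest enclosing convex set'' property rather than on the extreme-point machinery the paper has been building, so I would present the extreme-point argument as primary.
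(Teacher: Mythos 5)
Your proof is correct, but it takes a genuinely different route from the paper's. The paper argues by cases: if $\mathbf{r}_0$ is an extreme point it cites Corollary~\ref{cor:exOnHull}; if not, it invokes the H-representation (Theorem~\ref{theor:hrepvrep}) to place $\mathbf{r}_0$ inside $\mathrm{Conv}(\{\mathbf{r}_i\})$ and then asserts that membership in the \emph{surface} ``comes directly from the definition of a halfspace.'' You instead give a single unified constructive argument with no case split and no appeal to Theorem~\ref{theor:hrepvrep}: you note $\mathbf{r}_0 \in \{\mathbf{r}_i\} \subseteq \mathrm{Conv}(\{\mathbf{r}_i\})$, displace along $\mathbf{a}$ to get $\mathbf{s}_\mathrm{out} \in B_\sigma(\mathbf{r}_0)$, and---this is the genuinely new ingredient relative to Corollary~\ref{cor:exOnHull}---shift the offset to $b' = b + \tfrac{1}{4}\sigma|\mathbf{a}|$ so that the plane strictly separates $\mathbf{s}_\mathrm{out}$ from \emph{all} of $\{\mathbf{r}_i\}$ including $\mathbf{r}_0$ itself, certifying $\mathbf{s}_\mathrm{out}$ as an extreme point of the augmented set in the sense of Definition~\ref{def:extremePoint} and hence outside the hull. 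What your version buys is precision exactly where the paper is terse: the paper's closing sentence for the non-extreme case is really the substantive claim to be proved, and your $\mathbf{s}_\mathrm{out}$ construction is the missing content that justifies it, handling both the extreme and non-extreme cases uniformly. What the paper's version buys is economy---it reuses Corollary~\ref{cor:exOnHull} wholesale for one case and leans on standard polytope theory for the other. Your ``cross-check'' (the closed halfspace is convex, contains every $\mathbf{r}_i$, hence contains the hull, and excludes $\mathbf{s}_\mathrm{out}$) is also valid and is arguably the cleanest argument of all three, though as you note it bypasses the extreme-point machinery the paper is deliberately exercising.
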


\begin{proof} 
There are two cases here which need to be proven. If $\mathbf{r}_0$ is an extreme point, then $\mathbf{r}_0 \in \partial \mathrm{Conv}(\{\mathbf{r}_i\})$ by Corollary
~\ref{cor:exOnHull}. If $\mathbf{r}_0$ is not an extreme point, then the half-plane representation described here is equivalent to that defining the H-Representation of a convex polytope, and thus $\mathbf{r}_0 \in \mathrm{Conv}(\{\mathbf{r}_i\})$ by Theorem~\ref{theor:hrepvrep}. The further statement that $\mathbf{r}_0 \in \partial \mathrm{Conv}(\{\mathbf{r}_i\})$ comes directly from the definition of a halfspace. 
\end{proof}

\begin{theorem}
Any set of $d$ or fewer points on the surface of a sphere are cohemispheric. That is, for a sphere centered at $\mathbf{r}_0$ with radius $\sigma_0$ and points $\{\mathbf{c}_i\}$ satisfying $|\mathbf{c}_i - \mathbf{r}_0| = \sigma_0$, there exists a vector $\mathbf{a}\in\mathbb{R}^d$ such that $\mathbf{a} \cdot (\mathbf{c}_i - \mathbf{r}_0) \ge 0$ for all $i$.
\label{theor:nonCohemisphericPoints}
\end{theorem}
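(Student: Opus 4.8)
The plan is to strip away the geometry of the sphere, which plays no essential role here, and recast the claim as a pure statement in linear algebra about the direction vectors. First I would set $\mathbf{v}_i \equiv \mathbf{c}_i - \mathbf{r}_0$ for each of the $k \le d$ points, noting that each $\mathbf{v}_i$ is a nonzero vector of length $\sigma_0$, and observe that the desired conclusion $\mathbf{a}\cdot(\mathbf{c}_i - \mathbf{r}_0) \ge 0$ is literally $\mathbf{a}\cdot\mathbf{v}_i \ge 0$. The common radius $\sigma_0$ never enters, so the theorem reduces to showing that any collection of $k \le d$ vectors in $\mathbb{R}^d$ admits a nonzero $\mathbf{a}$ that has nonnegative inner product with each of them.

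The key structural fact I would exploit is a dimension count. Let $r$ be the dimension of $\mathrm{span}\{\mathbf{v}_i\}$, so that $r \le k \le d$, and split into two cases. In the first case $r < d$. Then the orthogonal complement of $\mathrm{span}\{\mathbf{v}_i\}$ has dimension $d - r > 0$ and hence contains a nonzero vector $\mathbf{a}$. By construction $\mathbf{a}\cdot\mathbf{v}_i = 0 \ge 0$ for every $i$, so the points are cohemispheric. Note that this case already subsumes every instance with $k < d$, since then $r \le k < d$ automatically.

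The remaining case is $r = d$, which forces $k = d$ and the $\mathbf{v}_i$ to be linearly independent; this is the step I expect to be the main (and essentially the only) obstacle, since the orthogonal-complement trick collapses when the $\mathbf{v}_i$ span all of $\mathbb{R}^d$. To handle it I would assemble the $\mathbf{v}_i$ as the rows of a $d\times d$ matrix $A$, which is invertible by linear independence, and solve the linear system $A\mathbf{a} = \mathbf{1}$, where $\mathbf{1}$ is the all-ones vector. The solution $\mathbf{a} = A^{-1}\mathbf{1}$ is nonzero, because its image $A\mathbf{a} = \mathbf{1}$ is nonzero, and it satisfies $\mathbf{a}\cdot\mathbf{v}_i = 1 > 0$ for all $i$. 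In both cases we have produced a nonzero $\mathbf{a}$ with $\mathbf{a}\cdot\mathbf{v}_i \ge 0$, which is exactly the definition of cohemispheric (including its requirement that $\mathbf{a}$ have at least one nonzero element), completing the proof.
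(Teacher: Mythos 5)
Your proof is correct, and it rests on the same underlying idea as the paper's---with at most $d$ direction vectors one can always find a nonzero $\mathbf{a}$ whose inner products with all of them are \emph{equal}, after which nonnegativity is immediate---but the two arguments are organized differently. The paper works with the points $\mathbf{c}_i$ themselves: it takes an affine hyperplane $\mathbf{a}'\cdot\mathbf{x} = b$ through all $d$ of them (a nonzero solution of $d$ homogeneous equations in the $d+1$ unknowns $(\mathbf{a}',b)$), so that every projection $\mathbf{a}'\cdot(\mathbf{c}_i-\mathbf{r}_0)$ equals the same number $b-b'$, and then replaces $\mathbf{a}'$ by $-\mathbf{a}'$ if that number is negative; degenerate configurations are handled only by the remark that any of the infinitely many hyperplanes will do. You instead pass to the difference vectors $\mathbf{v}_i$ and split on the rank of their span: in the rank-deficient case you take $\mathbf{a}$ from the orthogonal complement (all projections zero), and in the full-rank case you invert the matrix of the $\mathbf{v}_i$ against the all-ones vector (all projections equal to one). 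Your decomposition buys an explicit, airtight treatment of the degenerate case and removes the sign-flip step---indeed in the full-rank case you obtain the strictly positive conclusion---at the cost of a case analysis that the paper's single hyperplane argument avoids. One small economy: your observation that each $\mathbf{v}_i$ has length $\sigma_0$ is never used, which is consistent with the paper's own remark that the sphere condition can be dropped entirely.
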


\begin{proof}
Here, we can relax the condition $|\mathbf{c}_i - \mathbf{r}_0| = \sigma_0$ and prove a more general theorem. A hyperplane in $\mathbb{R}^d$ can always be formed which passes through the $d$ contact points. That is, there exist $\mathbf{a}' \in \mathbb{R}^d$ and $b \in \mathbb{R}^d$ such that $\mathbf{a}' \cdot \mathbf{c}_i = b$ for all $\mathbf{c}_i$. Note that if we construct a matrix $C$ with rows $\mathbf{c}_i$, then this hyperplane is not unique if $\mathrm{det}(C)=0$, but any of the infinitely many solutions will suffice.

We can define $b' \in \mathbb{R}^d$ by $\mathbf{a}' \cdot \mathbf{r}_0 = b'$, then $\mathbf{a}' \cdot (\mathbf{c}_i - \mathbf{r}_0) = b - b'$. If $b' \le b$, then  $b-b' \ge 0$, and we can take $\mathbf{a} = \mathbf{a}'$, whereupon the theorem is proven. If $b' > b$, then we can take $\mathbf{a} = -\mathbf{a}'$, whereupon the theorem is proven.
\end{proof}

From this, we note that the minimal number of points $\mathbf{c}_i$ for which this theorem no longer holds is $d+1$. This is not to say that \textit{any} $d+1$ points on the surface are non-cohemispheric (see for example, Fig.~\ref{fig:rattlerDef}a), but to state that \textit{the minimal} number of points on a sphere which are non-cohemispheric is $d+1$.

\begin{theorem}
Given $\mathbf{f}_{i1} \in \{\mathbf{f}_i\}$ and $\mathbf{a} \in \mathbb{R}^d$ where $\mathbf{f}_{i1} \neq \mathbf{0}$, $\mathbf{a} \neq \mathbf{0}$, and $\mathbf{a}\cdot\mathbf{f}_{i1} \neq 0$, $\sum_i\mathbf{f}_i = \mathbf{0}$ only if there exists $\mathbf{f}_{i2} \in \{\mathbf{f}_i\} \setminus \mathbf{f}_{i1}$ such that $\mathrm{sign}(\mathbf{a} \cdot \mathbf{f}_{i1}) = -\mathrm{sign}(\mathbf{a} \cdot \mathbf{f}_{i2}) $.
\label{theor:zeroSum}
\end{theorem}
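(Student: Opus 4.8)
The plan is to collapse the $d$-dimensional force-balance statement into a one-dimensional statement about real numbers by projecting everything onto the direction $\mathbf{a}$. First I would take the dot product of the hypothesis $\sum_i \mathbf{f}_i = \mathbf{0}$ with $\mathbf{a}$ and use bilinearity of the inner product to write $\sum_i (\mathbf{a}\cdot\mathbf{f}_i) = \mathbf{a}\cdot\mathbf{0} = 0$. This converts the problem into a statement about a finite collection of real scalars $\{\mathbf{a}\cdot\mathbf{f}_i\}$ whose sum vanishes, one of which---namely $\mathbf{a}\cdot\mathbf{f}_{i1}$---is strictly nonzero by hypothesis.

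Next I would isolate the distinguished term. Separating $\mathbf{f}_{i1}$ out of the sum gives $\sum_{\mathbf{f}_j \in \{\mathbf{f}_i\}\setminus\mathbf{f}_{i1}} (\mathbf{a}\cdot\mathbf{f}_j) = -(\mathbf{a}\cdot\mathbf{f}_{i1})$, so the remaining projections sum to a nonzero real number whose sign is opposite that of $\mathbf{a}\cdot\mathbf{f}_{i1}$. The only substantive step is then the elementary fact that a finite sum of real numbers can be strictly negative only if at least one summand is strictly negative, and likewise for the positive case. Applying this with the appropriate sign, there must exist some $\mathbf{f}_{i2} \in \{\mathbf{f}_i\}\setminus\mathbf{f}_{i1}$ with $\mathrm{sign}(\mathbf{a}\cdot\mathbf{f}_{i2}) = -\mathrm{sign}(\mathbf{a}\cdot\mathbf{f}_{i1})$, which is exactly the claim. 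I would either treat the two sign cases ($\mathbf{a}\cdot\mathbf{f}_{i1} > 0$ and $\mathbf{a}\cdot\mathbf{f}_{i1} < 0$) symmetrically or fold them together by carrying the $\mathrm{sign}$ notation throughout.

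Equivalently, I could phrase this by contraposition: if every $\mathbf{f}_j$ with $j \neq i1$ failed to have the opposite sign, then all such projections would share the sign of $\mathbf{a}\cdot\mathbf{f}_{i1}$ or vanish, forcing the full projected sum to be strictly of that sign and hence nonzero, contradicting the projected balance. I do not anticipate a genuine obstacle, as the theorem is essentially a sign-counting argument; the only points requiring care are the bookkeeping of strict versus weak inequalities---it is the strict nonvanishing of $\mathbf{a}\cdot\mathbf{f}_{i1}$ that forces a genuinely opposing term rather than merely permitting zeros---and the clean handling of the degenerate case in which $\{\mathbf{f}_i\}\setminus\mathbf{f}_{i1}$ is empty, where the hypothesis $\sum_i\mathbf{f}_i=\mathbf{0}$ already fails since $\mathbf{f}_{i1}\neq\mathbf{0}$ and the implication therefore holds vacuously.
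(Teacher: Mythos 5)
Your proposal is correct and follows essentially the same route as the paper's proof: project the force balance onto $\mathbf{a}$, isolate the nonzero term $\mathbf{a}\cdot\mathbf{f}_{i1}$, and invoke the elementary fact that the remaining real summands must include one of the opposite sign. Your extra care with the degenerate empty-set case and the strict-versus-weak inequality bookkeeping is a slight refinement but not a different argument.
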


\begin{proof}
Here we project $\mathbf{a}$ onto the sum yielding ${\sum_i(\mathbf{a} \cdot \mathbf{f}_{i}) = \mathbf{a} \cdot \mathbf{f}_{i1} + \sum_{i\neq i1}(\mathbf{a} \cdot \mathbf{f}_{i}) = 0}$. This last equality can only be true if there is at least one element of the sum which is of the opposite sign of $\mathbf{a} \cdot \mathbf{f}_{i1}$, implying that there exists $\mathbf{f}_{i2} \in \{\mathbf{f}_i\} \setminus \mathbf{f}_{i1}$ such that $\mathrm{sign}(\mathbf{a} \cdot \mathbf{f}_{i1}) = -\mathrm{sign}(\mathbf{a} \cdot \mathbf{f}_{i2}) $.
\end{proof}

This theorem is meant to be a vector extension of the trivial theorem that a sum of numbers can only be zero if either all elements are zero, or if it contains both positive and negative elements. Setting aside the null case, this theorem simply states that a sum of vectors with at least one non-zero element can only be zero if it contains positive and negative elements when projected onto (almost) any axis. A mild caveat must be added, namely that the projection is not onto a vector normal to a chosen non-zero vector in the set. This caveat is only a formality as the projecting vector $\mathbf{a}$ is arbitrary.

\begin{corollary}
A particle with zero net force and at least $d+1$ non-cohemispheric non-zero forces is locally stable.
\label{cor:nonCohemisphericForces}
\end{corollary}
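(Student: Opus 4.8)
The plan is to connect the definition of local stability (Definition~\ref{def:stable}: zero net force plus forces spanning $\mathbb{R}^d$) to the non-cohemispheric hypothesis, using Theorem~\ref{theor:zeroSum} as the engine. The zero-net-force condition is given, so the only thing to establish is that a set of $d+1$ non-cohemispheric forces spans $\mathbb{R}^d$. I would prove the contrapositive of the spanning claim: if the forces fail to span $\mathbb{R}^d$, then they must be cohemispheric.

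Let me reconsider—the cleaner route leverages the definitions directly. Suppose, toward contradiction, that the $d+1$ non-zero forces $\{\mathbf{f}_i\}$ do \emph{not} span $\mathbb{R}^d$. Then they lie in some proper subspace, so there exists a nonzero vector $\mathbf{a} \in \mathbb{R}^d$ orthogonal to that subspace, giving $\mathbf{a} \cdot \mathbf{f}_i = 0$ for all $i$. But this immediately makes the forces cohemispheric (the condition $\mathbf{a}\cdot\mathbf{f}_i \ge 0$ holds trivially with equality), contradicting the non-cohemispheric hypothesis. Hence the forces span $\mathbb{R}^d$. Combined with the given zero-net-force condition, both clauses of Definition~\ref{def:stable} are satisfied, so the particle is locally stable.

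First I would state the contrapositive setup and invoke the existence of the orthogonal vector $\mathbf{a}$ from elementary linear algebra. Then I would note that $\mathbf{a}\cdot\mathbf{f}_i = 0$ for all $i$ satisfies the definition of cohemispheric forces (since $\ge 0$ is inclusive), yielding the contradiction. The role of Theorem~\ref{theor:zeroSum} here is more subtle: it guarantees that the zero-sum condition is \emph{consistent} with non-cohemispheric forces rather than forcing degeneracy, but the spanning argument is really what carries the proof. I would also remark briefly on why $d+1$ is the relevant threshold, echoing the observation after Theorem~\ref{theor:nonCohemisphericPoints} that fewer than $d+1$ forces are always cohemispheric and hence can never span.

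The main obstacle I anticipate is purely definitional bookkeeping rather than any deep difficulty: one must be careful that the notion of cohemispheric in the paper's definition uses the \emph{inclusive} inequality $\mathbf{a}\cdot\mathbf{f}_i \ge 0$, which is exactly what permits the degenerate all-zeros projection to count as cohemispheric and thus makes the contrapositive clean. A secondary subtlety is ensuring the forces being non-zero is used correctly—if any force were zero the spanning argument would be unaffected, but the non-zero hypothesis matches the statement of Theorem~\ref{theor:zeroSum} and keeps the force directions well-defined. Since this corollary essentially repackages the spanning condition geometrically, the proof should be short, with the linear-algebraic orthogonal-complement step being the only real content.
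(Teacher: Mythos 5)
Your proposal is correct, and it actually hews more closely to Definition~\ref{def:stable} than the paper's own argument does. The paper's proof first disposes of the cases of $d$ or fewer forces (using Theorem~\ref{theor:nonCohemisphericPoints} to show such forces are necessarily cohemispheric), and then argues that for $d+1$ non-cohemispheric forces no separating vector $\mathbf{a}$ exists, so that by Theorem~\ref{theor:zeroSum} every projection direction sees both positive and negative components and the net force can vanish without all forces being trivially zero; it concludes stability from there without ever explicitly verifying the spanning clause of Definition~\ref{def:stable}. You instead take the zero-net-force hypothesis as given and prove the spanning clause directly by contrapositive: if the forces lay in a proper subspace, any nonzero $\mathbf{a}$ in the orthogonal complement would give $\mathbf{a}\cdot\mathbf{f}_i = 0 \ge 0$ for all $i$, making the forces cohemispheric and contradicting the hypothesis. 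This is a clean and complete argument --- the inclusive inequality in the definition of cohemispheric is exactly what makes the degenerate all-zero projection count, as you note. What the paper's route buys is a self-contained narrative about why $d+1$ is the threshold and why fewer forces cannot balance nontrivially (useful intuition that feeds into Theorem~\ref{theor:unstableHull}); what your route buys is a tighter, definition-driven proof in which Theorem~\ref{theor:zeroSum} is correctly identified as inessential to the corollary itself. Your closing remark about the non-zero hypothesis is also apt: it is not load-bearing for the spanning step.
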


\noindent Remark: 
This theorem applies more generally to both point particles and any shape of particle with forces pointing towards its center of mass. Such a particle will be stable to translations, but not to rotations.

\begin{proof}

We label the set of non-zero forces $\{\mathbf{f}_i\}$ and the particle center by $\mathbf{r}$. Note that the minimum number of vectors needed to span $\mathbb{R}^d$ is $d$, so a particle is unstable with fewer than $d$ forces acting upon it. Furthermore, a particle with $d$ forces acting upon it is unstable by Theorem~\ref{theor:nonCohemisphericPoints}, as these forces are necessarily cohemispheric, and thus there exists $\mathbf{a} \in \mathbb{R}^d$ such that $\mathbf{a} \cdot \mathbf{f_i} \ge 0$ for all $i$. By Theorem~\ref{theor:zeroSum}, $\sum \mathbf{f}_i \neq 0$ unless $\mathbf{f}_i = 0$ for all $i$, and thus a particle with $d$ non-zero forces acting upon it is unstable.

By definition, if there are $d+1$ non-cohemispheric non-zero forces, then no $\mathbf{a}$ exists for which $\mathbf{a} \cdot \mathbf{f}_i \ge 0$ for all $i$. Thus, for all $\mathbf{a} \in \mathbb{R}^d$ with at least one non-zero element, Theorem~\ref{theor:zeroSum} states that there will be positive and negative projections, and thus the net force can sum to zero without all forces being trivially zero, and thus the particle is locally stable.
\end{proof}

\begin{theorem}
A particle with center $\mathbf{r}_0$ and with contacting particles centered at $\{\mathbf{r}_i\}$ is unstable if $\mathbf{r}_0 \in \partial \mathrm{Conv}(\mathbf{r}_0,\{\mathbf{r}_i\})$.
\label{theor:unstableHull}
\end{theorem}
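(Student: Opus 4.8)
The plan is to convert the geometric hypothesis $\mathbf{r}_0 \in \partial \mathrm{Conv}(\mathbf{r}_0, \{\mathbf{r}_i\})$ into a statement about the contact forces and then read off instability from the cohemispheric criterion. Recall that each contact force points from the center toward a neighbor, $\mathbf{f}_{0i} \propto (\mathbf{r}_i - \mathbf{r}_0)$ with positive magnitude, so the directions of the forces $\{\mathbf{f}_{0i}\}$ coincide with the directions $\{\mathbf{r}_i - \mathbf{r}_0\}$ up to positive rescaling. The whole argument therefore reduces to showing that lying on the hull surface confines the contact directions to a closed halfspace.

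First I would establish the supporting-hyperplane property, which is the converse of the halfspace corollary already obtained from Theorem~\ref{theor:hrepvrep}. Because $\mathrm{Conv}(\mathbf{r}_0,\{\mathbf{r}_i\})$ is a convex polytope, Theorem~\ref{theor:hrepvrep} writes it as a finite intersection of halfspaces $\{\mathbf{y}: \mathbf{a}_k \cdot \mathbf{y} \le b_k\}$. A surface point cannot satisfy every defining inequality strictly (otherwise a small ball about it would remain inside the polytope, contradicting the definition of $\partial \mathrm{Conv}$), so $\mathbf{r}_0$ saturates at least one of them: there is a nonzero $\mathbf{a} \equiv \mathbf{a}_k$ with $\mathbf{a} \cdot \mathbf{r}_0 = b_k$ and $\mathbf{a} \cdot \mathbf{r}_i \le b_k$ for every $i$, since each $\mathbf{r}_i$ lies in the polytope. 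Thus $\mathbf{a} \cdot (\mathbf{r}_i - \mathbf{r}_0) \le 0$ and hence $\mathbf{a} \cdot \mathbf{f}_{0i} \le 0$ for all $i$; taking $-\mathbf{a}$ shows the forces are cohemispheric.

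With the forces cohemispheric I would close against Definition~\ref{def:stable}. Writing $\mathbf{a}' = -\mathbf{a}$ we have $\mathbf{a}' \cdot \mathbf{f}_{0i} \ge 0$ for all $i$. Under the standing equilibrium assumption $\sum_i \mathbf{f}_{0i} = \mathbf{0}$, projecting onto $\mathbf{a}'$ gives $\sum_i (\mathbf{a}' \cdot \mathbf{f}_{0i}) = 0$; as every term is nonnegative, every term must vanish, which is exactly what Theorem~\ref{theor:zeroSum} enforces in this cohemispheric specialization. Every force then lies in the $(d-1)$-dimensional hyperplane $\{\mathbf{x}: \mathbf{a}' \cdot \mathbf{x} = 0\}$, so $\mathrm{span}\{\mathbf{f}_{0i}\} \subsetneq \mathbb{R}^d$: the forces fail to span $\mathbb{R}^d$ and the particle is unstable. (If instead the configuration were not force-balanced, a strictly positive projection $\mathbf{a}' \cdot \sum_i \mathbf{f}_{0i} > 0$ already precludes equilibrium, so instability holds either way.)

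I expect the only real obstacle to be the first step, namely producing the supporting hyperplane through a surface point in full generality. The clean subcase is when $\mathbf{r}_0$ is an extreme point, where Definition~\ref{def:extremePoint} supplies a strict separating hyperplane directly; the subtlety lies in non-extreme surface points sitting on a lower face, and especially in the degenerate situation where the hull is not full dimensional, forcing the H-representation of Theorem~\ref{theor:hrepvrep} to be taken within the affine span of the points. In that degenerate case I would simply take $\mathbf{a}$ normal to the affine span, giving $\mathbf{a} \cdot (\mathbf{r}_i - \mathbf{r}_0) = 0$ for all $i$; this is a degenerate cohemispheric certificate, and the concluding span argument applies unchanged.
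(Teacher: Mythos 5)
Your proposal is correct and rests on the same core mechanism as the paper's proof---produce a supporting hyperplane at $\mathbf{r}_0$ and project the contact forces onto its normal---but it is organized differently and is, in one respect, tighter. The paper splits into two cases, extreme versus non-extreme surface points, taking the separating hyperplane from Definition~\ref{def:extremePoint} in the first case; you instead derive the supporting hyperplane uniformly from the H-representation of Theorem~\ref{theor:hrepvrep} by noting that a surface point must saturate at least one defining inequality, which handles both cases at once and also lets you dispose of the non-full-dimensional hull by taking $\mathbf{a}$ normal to the affine span. The more substantive difference is the endgame when every projection $\mathbf{a}'\cdot\mathbf{f}_{0i}$ vanishes: the paper invokes Theorem~\ref{theor:zeroSum} to conclude that either the forces do not balance or they all vanish, but that dichotomy does not literally hold when all forces lie in the hyperplane orthogonal to $\mathbf{a}'$ (two antiparallel in-plane forces balance without vanishing, and Theorem~\ref{theor:zeroSum} requires a force with nonzero projection to get started). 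You close this case correctly by observing that forces confined to a $(d-1)$-dimensional hyperplane cannot span $\mathbb{R}^d$, so instability follows from the span clause of Definition~\ref{def:stable}. That is the right way to finish, and it repairs the one soft spot in the paper's own argument.
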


\begin{proof}
We have two instances to prove. If $\mathbf{r}_0$ is an extreme point, then by Definition~\ref{def:extremePoint}, there exist $\mathbf{a} \in \mathbb{R}^d$ and $b \in \mathbb{R}$ where $\mathbf{a} \cdot \mathbf{r}_0 - b > 0$ while $\mathbf{a} \cdot \mathbf{r}_j - b \le 0$ for all $\mathbf{r}_j \in \{\mathbf{r}_i\} \setminus \mathbf{r}_0$. The contact forces on $\mathbf{r}_0$ are all of the form $\mathbf{f}_j = c_j(\mathbf{r}_0 - \mathbf{r}_j)$ with $c_j \in \mathbb{R}$ and $c_j \ge 0$. Thus $\sum_j \mathbf{f}_j = \sum_j c_j(\mathbf{r}_0 - \mathbf{r}_j)$. Taking the projection on $\mathbf{a}$, we have $\sum_j \mathbf{a} \cdot \mathbf{f}_j = \sum_j c_j(\mathbf{a} \cdot \mathbf{r}_0 - \mathbf{a} \cdot \mathbf{r}_j)$. Depending on the sign of $b$, the non-zero terms are either all positive or all negative, meaning that the sum cannot be zero unless all $c_j$ are zero. Thus, by Theorem~\ref{theor:zeroSum}, either $\sum_j \mathbf{f}_j \neq \mathbf{0}$, or $\mathbf{f}_j = 0$ for all $j$. Either condition means that the particle is unstable. 

If $\mathbf{r}_0$ were not an extreme point, then the sum $\sum_j \mathbf{a} \cdot \mathbf{f}_j$ could only be $0$ if $\mathbf{a} \cdot \mathbf{r}_j - b = 0$ for all $j$. These forces would then all be co-hemispheric, and thus by Theorem~\ref{theor:zeroSum}, either $\sum_j \mathbf{f}_j \neq \mathbf{0}$, or $\mathbf{f}_j = 0$ for all $j$, and thus the particle is unstable.
\end{proof}

Here we note that this is a sufficient condition for instability, and not a necessary one. If $\mathbf{r}_0$ is out of force balance with neighboring contacts $\{\mathbf{r}_i\}$, but $\mathbf{r}_0 \notin \mathrm{Conv}(\mathbf{r}_0,\{\mathbf{r}_i\})$, then $\mathbf{r}_0$ is still unstable.

\begin{figure}[htp!]
\includegraphics[width=0.7\linewidth]{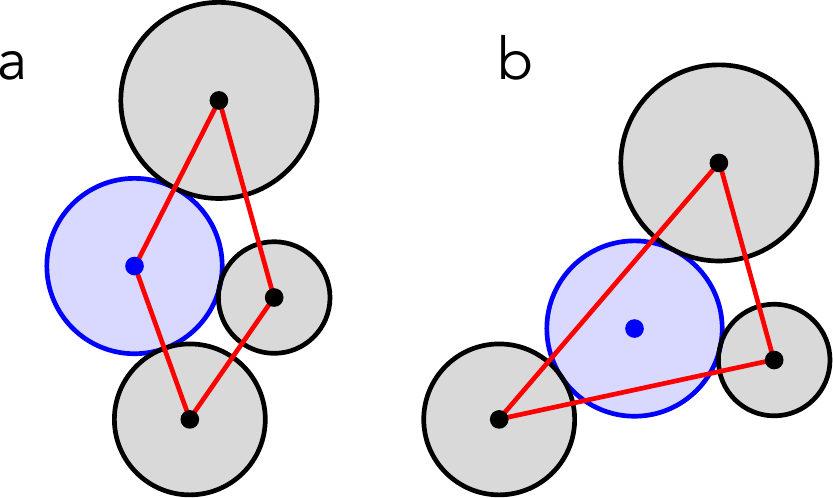}
\caption{We test whether the blue particle is stable by looking at the convex hull (red) of its own center and the centers of its stable neighboring particles (in black). Note that there may be other contacts with the blue particle which have been determined to be unstable and are thus not shown. In (a) the blue particle is unstable, because its center lies on the surface of the convex hull. In (b) the blue particle is stable, because its center is not on the surface of the convex hull.}
\label{fig:rattlerDef}
\end{figure}

\begin{theorem}
A particle with center  $\mathbf{r}_0$ and with a non-empty set of stable contacting particles centered at $\{\mathbf{r}_i\}$ is locally stable if and only if $\mathbf{r}_0 \notin \partial \mathrm{Conv}(\mathbf{r}_0,\{\mathbf{r}_i\})$ and the sum of forces acting on the particle is zero.
\label{theor:mainCH}
\end{theorem}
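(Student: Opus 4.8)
The plan is to prove the biconditional by treating each implication separately, using the fact that by Definition~\ref{def:stable} ``locally stable'' is the conjunction of two conditions: that the net force vanish and that the contact forces span $\mathbb{R}^d$. The forward direction is then almost immediate. Assuming the particle is locally stable, force balance $\sum_j \mathbf{f}_j = \mathbf{0}$ is already part of the definition, so only $\mathbf{r}_0 \notin \partial\mathrm{Conv}(\mathbf{r}_0,\{\mathbf{r}_i\})$ remains to be shown. I would obtain this by contraposition of Theorem~\ref{theor:unstableHull}: were it the case that $\mathbf{r}_0 \in \partial\mathrm{Conv}(\mathbf{r}_0,\{\mathbf{r}_i\})$, that theorem would force the particle to be unstable, contradicting local stability.

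The substantive work is the converse. Here I assume both $\sum_j \mathbf{f}_j = \mathbf{0}$ and $\mathbf{r}_0 \notin \partial\mathrm{Conv}(\mathbf{r}_0,\{\mathbf{r}_i\})$, and I must supply the missing half of Definition~\ref{def:stable}, namely that the forces span $\mathbb{R}^d$. By Corollary~\ref{cor:nonCohemisphericForces} it suffices to show that the contact forces are non-cohemispheric, since force balance is assumed and non-cohemisphericity in $\mathbb{R}^d$ requires at least $d+1$ vectors by Theorem~\ref{theor:nonCohemisphericPoints}. I would argue by contradiction: suppose the forces were cohemispheric, so that some $\mathbf{a}\neq\mathbf{0}$ satisfies $\mathbf{a}\cdot\mathbf{f}_j \ge 0$ for every $j$. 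Writing each contact force as $\mathbf{f}_j = c_j(\mathbf{r}_0 - \mathbf{r}_j)$ with $c_j \ge 0$, this reads $\mathbf{a}\cdot(\mathbf{r}_0-\mathbf{r}_j)\ge 0$ for every $j$, i.e. $\mathbf{a}\cdot\mathbf{r}_j \le \mathbf{a}\cdot\mathbf{r}_0$ for all neighbors. Setting $b = \mathbf{a}\cdot\mathbf{r}_0$ produces a $(d-1)$-plane through $\mathbf{r}_0$ with $\mathbf{a}\cdot\mathbf{r}_0 - b = 0$ and $\mathbf{a}\cdot\mathbf{r}_j - b \le 0$ for all neighbors, which is precisely the supporting-halfplane hypothesis of the corollary following Theorem~\ref{theor:hrepvrep}; that corollary then places $\mathbf{r}_0$ on $\partial\mathrm{Conv}(\mathbf{r}_0,\{\mathbf{r}_i\})$, contradicting the assumption. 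Hence the forces are non-cohemispheric, and Corollary~\ref{cor:nonCohemisphericForces} delivers local stability.

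I expect the converse to be the only genuine obstacle, and within it the crux is the algebraic bridge between a geometric statement about $\mathrm{Conv}(\mathbf{r}_0,\{\mathbf{r}_i\})$ and a statement about force directions. The identification $b=\mathbf{a}\cdot\mathbf{r}_0$ is exactly what converts the cohemisphericity inequalities $\mathbf{a}\cdot\mathbf{f}_j\ge0$ into the H-representation hypothesis $\mathbf{a}\cdot\mathbf{r}_j - b \le 0$ recognized by the preceding corollary. This step relies on each force being proportional to $\mathbf{r}_0-\mathbf{r}_j$ with a nonnegative coefficient, which is why it is essential that $\{\mathbf{r}_i\}$ be the set of load-bearing stable contacts: for these the coefficients $c_j$ are strictly positive, so $\mathbf{a}\cdot\mathbf{f}_j\ge0$ is equivalent to $\mathbf{a}\cdot(\mathbf{r}_0-\mathbf{r}_j)\ge0$ and no spurious zero-force contact can sit outside the supporting halfplane while still belonging to the hull. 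Finally, to apply Corollary~\ref{cor:nonCohemisphericForces} one observes that non-cohemispheric forces in $\mathbb{R}^d$ cannot all vanish and, by Theorem~\ref{theor:nonCohemisphericPoints}, must number at least $d+1$, supplying exactly its hypotheses. With these points in place the two implications combine to yield the stated equivalence.
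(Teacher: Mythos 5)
Your proof is correct and follows essentially the same route as the paper's: the forward direction from Definition~\ref{def:stable} plus the contrapositive of Theorem~\ref{theor:unstableHull}, and the converse by showing the contact forces are non-cohemispheric and then invoking Theorem~\ref{theor:nonCohemisphericPoints} for the count of $d+1$ and Corollary~\ref{cor:nonCohemisphericForces} for stability. The one point of difference is that you explicitly supply, via the supporting-hyperplane corollary to Theorem~\ref{theor:hrepvrep} and the identification $b=\mathbf{a}\cdot\mathbf{r}_0$, the bridge from ``$\mathbf{r}_0\notin\partial\mathrm{Conv}(\mathbf{r}_0,\{\mathbf{r}_i\})$'' to ``no cohemispheric direction exists,'' a step the paper asserts without detail---a gain in rigor, not a different argument.
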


\begin{proof}
The statement that $\mathbf{r}_0$ is locally stable if $\mathbf{r}_0 \notin \partial \mathrm{Conv}(\mathbf{r}_0,\{\mathbf{r}_i\})$, and the sum of all forces acting on the particle from $\{\mathbf{r}_i\}$ is zero follows a recursive application of Definition~\ref{def:stable} and Theorem~\ref{theor:unstableHull}. 

Next, we must prove that $\mathbf{r}_0 \notin \partial \mathrm{Conv}(\mathbf{r}_0,\{\mathbf{r}_i\})$ with stable contacts $\{\mathbf{r}_i\}$ and zero net force implies that $\mathbf{r}_0$ is locally stable and thus has a set of stable forces acting on the particle centered at $\mathbf{r}_0$ which both span $\mathbb{R}^d$ and sum to zero. Because $\mathbf{r}_0 \notin \partial \mathrm{Conv}(\mathbf{r}_0,\{\mathbf{r}_i\})$, we know that $\mathbf{r}_0$ is neither an extreme point of the convex hull, nor is it on the surface. Thus no $\mathbf{a}$ exists for which the contact forces, labelled $\{\mathbf{f}_i\}$ have the property $\mathbf{a} \cdot \mathbf{f}_i \ge 0$ for all $i$. These forces are thus non-cohemispheric, and so from Theorem~\ref{theor:nonCohemisphericPoints}, there must be $d+1$ of them. And because this particle has zero net force acting upon it, by Corollary~\ref{cor:nonCohemisphericForces}, the particle is locally stable.

An illustration of this theorem is given in Fig.~\ref{fig:rattlerDef}.
\end{proof}

\subsection{Stability via the radical Voronoi cell}
\label{sec:rvProof}

\begin{theorem}
If $i$ and $j$ are hard particles with centers $\mathbf{r}_i$ and $\mathbf{r}_j$ and radii $\sigma_i$ and $\sigma_j$ and $h_{ij} = 0$, then $\overline{B}_{\sigma_i}(\mathbf{r}_i) \cap \overline{B}_{\sigma_j}(\mathbf{r}_j)$ contains exactly one point $\mathbf{c}_{ij}$ where $\mathbf{c}_{ij} \in \partial R(\mathbf{r}_i)$ and $\mathbf{c}_{ij} \in \partial R(\mathbf{r}_j)$.
\label{theor:HStouchRV}
\end{theorem}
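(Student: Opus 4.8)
The plan is to exhibit the contact point explicitly, verify that it has zero power with respect to both spheres, use the hard-sphere non-overlap condition to place it inside both radical cells, and then perturb along the center-to-center axis to land it on the shared boundary. For the uniqueness of the intersection I would start from the fact that $h_{ij}=0$ is equivalent to $|\mathbf{r}_i - \mathbf{r}_j| = \sigma_i + \sigma_j$. Any $\mathbf{y} \in \overline{B}_{\sigma_i}(\mathbf{r}_i) \cap \overline{B}_{\sigma_j}(\mathbf{r}_j)$ obeys $|\mathbf{y}-\mathbf{r}_i| \le \sigma_i$ and $|\mathbf{y}-\mathbf{r}_j| \le \sigma_j$, so the chain $|\mathbf{r}_i - \mathbf{r}_j| \le |\mathbf{y}-\mathbf{r}_i| + |\mathbf{y}-\mathbf{r}_j| \le \sigma_i + \sigma_j = |\mathbf{r}_i - \mathbf{r}_j|$ must hold with equality throughout. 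Equality in the triangle inequality forces $\mathbf{y}$ onto the segment joining the centers at distance exactly $\sigma_i$ from $\mathbf{r}_i$, pinning it to the single point $\mathbf{c}_{ij} = \mathbf{r}_i + \sigma_i (\mathbf{r}_j - \mathbf{r}_i)/|\mathbf{r}_j - \mathbf{r}_i|$; a direct check that $\mathbf{c}_{ij}$ lies in both balls shows the intersection is exactly $\{\mathbf{c}_{ij}\}$.

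Next I would compute the power of $\mathbf{c}_{ij}$. Since $|\mathbf{c}_{ij} - \mathbf{r}_i| = \sigma_i$ and $|\mathbf{c}_{ij} - \mathbf{r}_j| = \sigma_j$, Definition~\ref{def:power} gives $\Pi_{\mathbf{r}_i,\sigma_i}(\mathbf{c}_{ij}) = \Pi_{\mathbf{r}_j,\sigma_j}(\mathbf{c}_{ij}) = 0$, so $\mathbf{c}_{ij}$ sits on the radical hyperplane of the pair. Membership in $R(\mathbf{r}_i)$ and $R(\mathbf{r}_j)$ then reduces to showing $\Pi_{\mathbf{r}_k,\sigma_k}(\mathbf{c}_{ij}) \ge 0$ for every other particle $k$. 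Here I would invoke the hard-sphere hypothesis: if $\mathbf{c}_{ij}$ lay strictly inside ball $k$, i.e. $|\mathbf{c}_{ij}-\mathbf{r}_k| < \sigma_k$, then $|\mathbf{r}_i - \mathbf{r}_k| \le |\mathbf{r}_i - \mathbf{c}_{ij}| + |\mathbf{c}_{ij} - \mathbf{r}_k| < \sigma_i + \sigma_k$, which is a forbidden overlap ($h_{ik} > 0$). Hence $|\mathbf{c}_{ij} - \mathbf{r}_k| \ge \sigma_k$ and $\Pi_{\mathbf{r}_k,\sigma_k}(\mathbf{c}_{ij}) \ge 0 = \Pi_{\mathbf{r}_i,\sigma_i}(\mathbf{c}_{ij})$, so $\mathbf{c}_{ij} \in R(\mathbf{r}_i)$ and, symmetrically, $\mathbf{c}_{ij} \in R(\mathbf{r}_j)$.

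To promote membership to boundary membership I would perturb along the unit vector $\mathbf{u} = (\mathbf{r}_j - \mathbf{r}_i)/|\mathbf{r}_j - \mathbf{r}_i|$. Writing $\mathbf{c}_{ij} - \mathbf{r}_i = \sigma_i\mathbf{u}$ and $\mathbf{c}_{ij} - \mathbf{r}_j = -\sigma_j\mathbf{u}$, a one-line expansion yields $\Pi_{\mathbf{r}_i,\sigma_i}(\mathbf{c}_{ij} + \epsilon\mathbf{u}) = 2\sigma_i\epsilon + \epsilon^2 > 0$ while $\Pi_{\mathbf{r}_j,\sigma_j}(\mathbf{c}_{ij} + \epsilon\mathbf{u}) = \epsilon^2 - 2\sigma_j\epsilon < 0$ for small $\epsilon > 0$. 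Thus $\mathbf{c}_{ij} + \epsilon\mathbf{u}$ has strictly larger power with respect to $i$ than to $j$ and therefore leaves $R(\mathbf{r}_i)$; since such points occur for arbitrarily small $\epsilon$, every $B_\sigma(\mathbf{c}_{ij})$ contains a point outside $R(\mathbf{r}_i)$, which is precisely the definition of the surface, giving $\mathbf{c}_{ij} \in \partial R(\mathbf{r}_i)$. The opposite perturbation $-\epsilon\mathbf{u}$ establishes $\mathbf{c}_{ij} \in \partial R(\mathbf{r}_j)$ in the same way.

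I expect the membership step to be the main obstacle, since it is the only place where the physical hard-sphere packing hypothesis genuinely enters and where one must rule out a third sphere engulfing the tangent point. The remaining ingredients are elementary Euclidean geometry together with the short power expansion, so the argument hinges on correctly translating ``hard particle'' into the non-overlap inequality $|\mathbf{r}_i - \mathbf{r}_k| \ge \sigma_i + \sigma_k$ and applying it at $\mathbf{c}_{ij}$.
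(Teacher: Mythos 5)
Your proof is correct and follows essentially the same route as the paper: the same explicit contact point, the same triangle-inequality argument for uniqueness of the intersection, and the same zero-power computation combined with the hard-sphere non-overlap condition to place $\mathbf{c}_{ij}$ in both radical cells. You are in fact slightly more complete than the paper in two spots---you spell out via the triangle inequality why no third sphere $k$ can give $\mathbf{c}_{ij}$ negative power, and you establish boundary membership $\mathbf{c}_{ij}\in\partial R(\mathbf{r}_i)$ directly by the $\epsilon\mathbf{u}$ perturbation rather than leaving it to the tiling property of radical cells---but these are refinements of the same argument, not a different one.
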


\begin{figure}[htp!]
\includegraphics[width=0.7\linewidth]{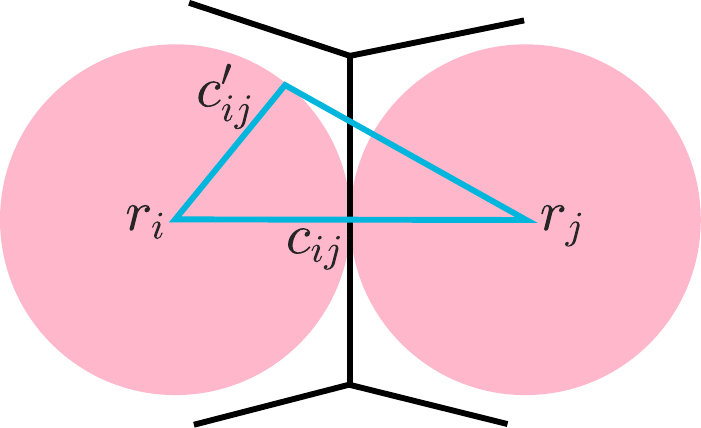}
\caption{The radical Voronoi diagram is shown between contacting particles $i$ and $j$ with contact point $\mathbf{c}_{ij}$. A second contact point between the two particles $\mathbf{c}_{ij}'$ is assumed, so that we can show $\mathbf{c}_{ij}' = \mathbf{c}_{ij}$ via the triangle inequality.}
\label{fig:HSRadVoroContact}
\end{figure}

\begin{proof}
We define 
\begin{equation}
\mathbf{c}_{ij} = \mathbf{r}_i + \sigma_i \frac{\mathbf{r}_j - \mathbf{r}_i}{|\mathbf{r}_j - \mathbf{r}_i|}
\label{eq:contactPoints}
\end{equation}
and note that $|\mathbf{c}_{ij} - \mathbf{r}_i| = \sigma_i$ so that $\mathbf{c}_{ij} \in \overline{B}_{\sigma_i}(\mathbf{r}_i)$ and $|\mathbf{c}_{ij} - \mathbf{r}_j| = |(\mathbf{r}_j - \mathbf{r}_i)|-\sigma_i$. We then note that $h_{ij} = 0$ implies $\sigma_j = |(\mathbf{r}_j - \mathbf{r}_i)|-\sigma_i$, and thus $|\mathbf{c}_{ij} - \mathbf{r}_j| = \sigma_j$ and so $\mathbf{c}_{ij} \in \overline{B}_{\sigma_j}(\mathbf{r}_j)$. 

To show that the intersection contains only one point, we assume that $\mathbf{c}_{ij}' \in \overline{B}_{\sigma_i}(\mathbf{r}_i) \cap \overline{B}_{\sigma_j}(\mathbf{r}_j)$ so that $|\mathbf{c}'_{ij} - \mathbf{r}_i| \le \sigma_i$ and $|\mathbf{c}'_{ij} - \mathbf{r}_j| \le \sigma_j$, but $\mathbf{c}_{ij}' \neq \mathbf{c}_{ij}$ (as in Fig.~\ref{fig:HSRadVoroContact}). By the triangle inequality, $|\mathbf{r}_j - \mathbf{r}_i| \le |\mathbf{r}_i - \mathbf{c}_{ij}'| + |\mathbf{r}_j - \mathbf{c}_{ij}'|$, which becomes the degenerate statement $\sigma_i + \sigma_j \le \sigma_i + \sigma_j$. The degeneracy implies a triangle of zero area, so that $\mathbf{c}_{ij}'$ lies on the line between $\mathbf{r}_i$ and $\mathbf{r}_j$, and by simple algebra, we find that $\mathbf{c}_{ij}' = \mathbf{c}_{ij}$. This is a contradiction, and thus the intersection $\overline{B}_{\sigma_i}(\mathbf{r}_i) \cap \overline{B}_{\sigma_j}(\mathbf{r}_j)$ contains only one point.

To show that $\mathbf{c}_{ij} \in R(\mathbf{r}_i)$ and $\mathbf{c}_{ij} \in R(\mathbf{r}_j)$, we calculate the power of $\mathbf{c}_{ij}$ with respect to each sphere. Here we find that $\Pi_{\mathbf{r}_i,\sigma_i}(\mathbf{c}_{ij}) = \Pi_{\mathbf{r}_j,\sigma_j}(\mathbf{c}_{ij}) = 0$. The only lower power would be negative (interior of a sphere), and because these are hard spheres, that is not possible. Thus, $\mathbf{c}_{ij} \in R(\mathbf{r}_i)$ and $\mathbf{c}_{ij} \in R(\mathbf{r}_j)$.
\end{proof}

\begin{corollary}
In a hard particle system, $\overline{B}_{\sigma_i}(\mathbf{r}_i) \cap \partial R(\mathbf{r}_i)$ contains only the contact points between particle $i$ and its contacting neighbors, centered at $\{\mathbf{r}_j\}$.
\label{cor:RVTangentPlane}
\end{corollary}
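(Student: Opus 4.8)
The plan is to prove the two inclusions separately: that every contact point lies in $\overline{B}_{\sigma_i}(\mathbf{r}_i) \cap \partial R(\mathbf{r}_i)$, and conversely that every point of this intersection is a contact point. The first inclusion is immediate from Theorem~\ref{theor:HStouchRV}: for each contacting neighbor $j$ the contact point $\mathbf{c}_{ij}$ satisfies $|\mathbf{c}_{ij} - \mathbf{r}_i| = \sigma_i$, so $\mathbf{c}_{ij} \in \overline{B}_{\sigma_i}(\mathbf{r}_i)$, while the same theorem already places $\mathbf{c}_{ij} \in \partial R(\mathbf{r}_i)$. The substantive work is therefore the reverse inclusion.

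For the reverse inclusion I would first note that every point $\mathbf{y} \in \partial R(\mathbf{r}_i)$ sits on a face of the radical cell shared with some neighboring cell $j$, so by the definition of the radical Voronoi cell the powers with respect to the two spheres are equal and minimal, $\Pi_{\mathbf{r}_i,\sigma_i}(\mathbf{y}) = \Pi_{\mathbf{r}_j,\sigma_j}(\mathbf{y})$. Membership in the closed ball gives $\Pi_{\mathbf{r}_i,\sigma_i}(\mathbf{y}) = |\mathbf{r}_i - \mathbf{y}|^2 - \sigma_i^2 \le 0$. The core of the argument is then a sign dichotomy on this common power.

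If the common power is strictly negative, then $\mathbf{y}$ lies in the interior of both $\overline{B}_{\sigma_i}(\mathbf{r}_i)$ and $\overline{B}_{\sigma_j}(\mathbf{r}_j)$, i.e. $|\mathbf{r}_i - \mathbf{y}| < \sigma_i$ and $|\mathbf{r}_j - \mathbf{y}| < \sigma_j$. The triangle inequality then yields $|\mathbf{r}_i - \mathbf{r}_j| < \sigma_i + \sigma_j$, which contradicts the hard-sphere non-overlap condition $h_{ij} \le 0$, equivalently $|\mathbf{r}_i - \mathbf{r}_j| \ge \sigma_i + \sigma_j$. Hence the common power must be zero, placing $\mathbf{y}$ on the surface of both spheres at once. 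The same triangle inequality now saturates to $|\mathbf{r}_i - \mathbf{r}_j| = \sigma_i + \sigma_j$, i.e. $h_{ij} = 0$, so $i$ and $j$ are genuinely in contact. Theorem~\ref{theor:HStouchRV} then identifies $\overline{B}_{\sigma_i}(\mathbf{r}_i) \cap \overline{B}_{\sigma_j}(\mathbf{r}_j)$ as the single point $\mathbf{c}_{ij}$, forcing $\mathbf{y} = \mathbf{c}_{ij}$ to be exactly that contact point.

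The main obstacle I anticipate is the bookkeeping at the first step of the reverse inclusion, namely carefully justifying that a boundary point of the radical cell really does sit on a radical plane shared with a neighbor, so that a well-defined index $j$ with equal minimal power exists; this relies on the tessellation property of the radical cells noted in Sec.~\ref{sec:definitions}. Once that structural fact is in hand, the sign dichotomy is driven entirely by the hard-sphere constraint through the triangle inequality, and the degenerate zero-power case is closed off by direct appeal to Theorem~\ref{theor:HStouchRV}.
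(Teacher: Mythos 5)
Your proof is correct and follows essentially the same route as the paper's: a sign analysis of the common power of a boundary point of the radical cell, with the hard-sphere non-overlap condition excluding negative power and Theorem~\ref{theor:HStouchRV} pinning zero-power points to the contact points. If anything, your handling of the zero-power case is slightly more careful than the paper's, since you explicitly derive $h_{ij}=0$ from the saturated triangle inequality before invoking Theorem~\ref{theor:HStouchRV}, whereas the paper simply asserts that case as already covered.
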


\begin{proof}
We know from Theorem~\ref{theor:HStouchRV} that $\overline{B}_{\sigma_i}(\mathbf{r}_i) \cap \partial R(\mathbf{r}_i)$ contains the contact points between particle $i$ and its contacting neighbors, so we need now only show that it contains no other points. Suppose $\mathbf{b} \in \overline{B}_{\sigma_i}(\mathbf{r}_i) \cap \partial R(\mathbf{r}_i)$ and that $\mathbf{b} \neq \mathbf{c}_{ij}$ from Eq.~\eqref{eq:contactPoints} for any $j$. Points on $\partial R(\mathbf{r}_i)$ have equal power with respect to at least one other sphere, which we will generically call $\mathbf{r}_j$. We have so far covered the case of zero power, and now consider points with negative power. As per Definition~\ref{def:power}, points of negative power are on the interior of both spheres, i.e. $\mathbf{b} \in B_{\sigma_i}(\mathbf{r}_i) \cap B_{\sigma_j}(\mathbf{r}_j)$, but because $i$ and $j$ are hard spheres $B_{\sigma_i}(\mathbf{r}_i) \cap B_{\sigma_j}(\mathbf{r}_j) = \emptyset$. Thus points of negative power are not in the intersection $\overline{B}_{\sigma_i}(\mathbf{r}_i) \cap \partial R(\mathbf{r}_i)$. Points of positive power are not contained within $\overline{B}_{\sigma_i}(\mathbf{r}_i)$ and are thus not in the intersection $\overline{B}_{\sigma_i}(\mathbf{r}_i) \cap \partial R(\mathbf{r}_i)$. Therefore, $\overline{B}_{\sigma_i}(\mathbf{r}_i) \cap \partial R(\mathbf{r}_i)$ contains only the contact points between particle $i$ and its contacting neighbors, centered at $\{\mathbf{r}_j\}$.
\end{proof}

\begin{theorem}
In a convex region $K$, if $\overline{B}_\sigma(\mathbf{a}) \subset K$ and $\overline{B}_\sigma(\mathbf{b}) \subset K$, then $\overline{B}_\sigma(\mathbf{c}) \subset K$ for all $\mathbf{c} = (t-1)\mathbf{a} + t\mathbf{b}$ where $t\in [0,1]$.
\label{theor:convexBall}
\end{theorem}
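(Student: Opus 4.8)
The plan is to prove containment pointwise. I would take an arbitrary point $\mathbf{y} \in \overline{B}_\sigma(\mathbf{c})$ and exhibit it as a convex combination of one point drawn from $\overline{B}_\sigma(\mathbf{a})$ and one point drawn from $\overline{B}_\sigma(\mathbf{b})$; then Definition~\ref{def:convex} immediately forces $\mathbf{y} \in K$. Since $\mathbf{y}$ is an arbitrary member of $\overline{B}_\sigma(\mathbf{c})$, this establishes $\overline{B}_\sigma(\mathbf{c}) \subset K$, which is exactly the claim. Throughout I write the center of the intermediate ball as the convex combination $\mathbf{c} = (1-t)\mathbf{a} + t\mathbf{b}$ with $t \in [0,1]$, consistent with Definition~\ref{def:convex}.

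The key step, and essentially the only idea in the argument, is to attach a \emph{single common displacement} to both centers. Given $\mathbf{y} \in \overline{B}_\sigma(\mathbf{c})$, I would set $\mathbf{v} = \mathbf{y} - \mathbf{c}$, so that $|\mathbf{v}| \le \sigma$ by definition of the closed ball. The crucial observation is that this same $\mathbf{v}$ may be added to each of $\mathbf{a}$ and $\mathbf{b}$: since $|(\mathbf{a}+\mathbf{v}) - \mathbf{a}| = |\mathbf{v}| \le \sigma$ we have $\mathbf{a}+\mathbf{v} \in \overline{B}_\sigma(\mathbf{a}) \subset K$, and identically $\mathbf{b}+\mathbf{v} \in \overline{B}_\sigma(\mathbf{b}) \subset K$.

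The remainder is a one-line computation, since the common offset passes through the convex combination unchanged:
\begin{equation}
\mathbf{y} = \mathbf{c} + \mathbf{v} = (1-t)\mathbf{a} + t\mathbf{b} + \mathbf{v} = (1-t)(\mathbf{a}+\mathbf{v}) + t(\mathbf{b}+\mathbf{v}).
\end{equation}
This exhibits $\mathbf{y}$ as a convex combination, with weights $1-t$ and $t$ in $[0,1]$, of the two points $\mathbf{a}+\mathbf{v}$ and $\mathbf{b}+\mathbf{v}$, both of which were just shown to lie in $K$. Convexity of $K$ (Definition~\ref{def:convex}) then gives $\mathbf{y} \in K$, completing the argument.

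I do not expect a genuine obstacle here; the entire content lies in recognizing that $\overline{B}_\sigma(\mathbf{c})$ is the translate $\mathbf{c} + \overline{B}_\sigma(\mathbf{0})$ and that translating $\mathbf{a}$, $\mathbf{b}$, and $\mathbf{c}$ by one and the same vector preserves the convex-combination relation among them. The only tempting misstep would be to try writing $\mathbf{y}$ as a convex combination of some point near $\mathbf{a}$ and some \emph{unrelated} point near $\mathbf{b}$, which does not obviously close up; forcing the two offsets to coincide is precisely what makes the weights $1-t$ and $t$ reappear. An equivalent high-level phrasing is that the Minkowski sum of the segment $[\mathbf{a},\mathbf{b}]$ with $\overline{B}_\sigma(\mathbf{0})$ stays inside $K$, but the pointwise version above is the most self-contained and is what I would record as the proof.
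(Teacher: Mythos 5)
Your proof is correct and is essentially the argument the paper intends: the paper's one-sentence proof ("this property is true for every individual point within the closed ball") is precisely your common-displacement observation, which you have simply written out in full. No substantive difference in approach.
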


\begin{figure}[htp!]
\includegraphics[width=0.5\linewidth]{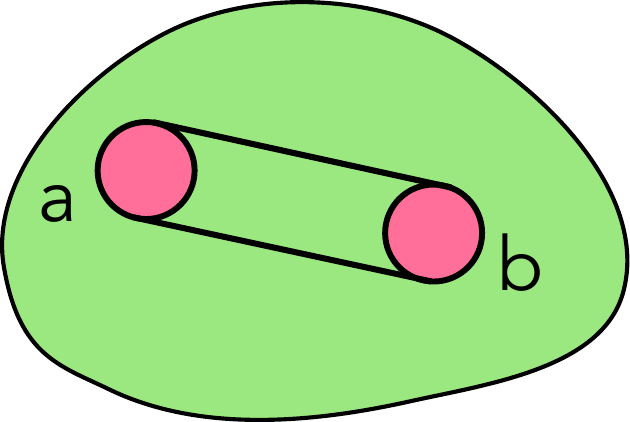}
\caption{An illustration of the fact that if two closed balls exist within a convex region (centered at $\mathbf{a}$ and $\mathbf{b}$ respectively), every closed ball on the line between the two is also contained in the region.}
\label{fig:convexBall}
\end{figure}

\begin{proof}
From Definition~\ref{def:convex}, this property is true for every individual point within the closed ball, so it is true for the closed ball itself. An illustration of the concept is given in Fig.~\ref{fig:convexBall}, where every ball contained on the line between $\mathbf{a}$ and $\mathbf{b}$ is contained in the convex region if the closed balls centered at $\mathbf{a}$ and $\mathbf{b}$ are contained in the region.
\end{proof}

\noindent Remark: We note that a further generalization of Theorem~\ref{theor:convexBall} is true when we have different radii balls at the endpoints $\overline{B}_{\sigma_a}(\mathbf{a})$ and $\overline{B}_{\sigma_b}(\mathbf{a})$, where then the interpolated ball has radius $\sigma_c = (t-1)\sigma_a + t\sigma_b$. This generalization is, however, not necessary for our purposes and would potentially obscure the results.

\begin{theorem}
If $M(R(\mathbf{r}_0))$ is not unique in a hard particle system, then the particle centered at $\mathbf{r}_0$ is not locally stable.
\label{theor:noUniqueNoStable}
\end{theorem}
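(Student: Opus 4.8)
The plan is to reduce the claim to the convex-hull instability criterion of Theorem~\ref{theor:unstableHull}. I will show that non-uniqueness of the maximum inscribed sphere forces the existence of a nonzero direction along which a copy of the particle can be displaced while remaining inside $R(\mathbf{r}_0)$, and that this displacement direction certifies $\mathbf{r}_0 \in \partial\mathrm{Conv}(\mathbf{r}_0,\{\mathbf{r}_j\})$, where $\{\mathbf{r}_j\}$ are the contacting neighbors.

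First I would record that the particle is always inscribed, $\overline{B}_{\sigma_0}(\mathbf{r}_0) \subset R(\mathbf{r}_0)$, so that the maximal inscribed radius obeys $\sigma^\ast \ge \sigma_0$. Indeed, any $\mathbf{y}$ with $|\mathbf{y}-\mathbf{r}_0|\le\sigma_0$ has $\Pi_{\mathbf{r}_0,\sigma_0}(\mathbf{y})\le 0$, while hard-sphere non-overlap, $|\mathbf{r}_j-\mathbf{r}_0|\ge\sigma_0+\sigma_j$, together with the triangle inequality gives $|\mathbf{y}-\mathbf{r}_j|\ge\sigma_j$ and hence $\Pi_{\mathbf{r}_j,\sigma_j}(\mathbf{y})\ge 0$ for every neighbor $j$, so $\mathbf{y}\in R(\mathbf{r}_0)$. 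Non-uniqueness then supplies two distinct maximum inscribed centers; since at most one can equal $\mathbf{r}_0$, I may fix one, $\mathbf{a}\neq\mathbf{r}_0$, with $\overline{B}_{\sigma^\ast}(\mathbf{a})\subset R(\mathbf{r}_0)$. Shrinking concentrically gives $\overline{B}_{\sigma_0}(\mathbf{a})\subset R(\mathbf{r}_0)$, a displaced copy of the particle with displacement $\mathbf{w}:=\mathbf{a}-\mathbf{r}_0\neq\mathbf{0}$. (Theorem~\ref{theor:convexBall} shows a whole segment of such inscribed copies exists, giving the geometric picture of a floppy mode, but only this single copy is needed below.)

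The core step converts this displacement into a half-space statement about the contacts. By Theorem~\ref{theor:HStouchRV} and Corollary~\ref{cor:RVTangentPlane}, the facets of $R(\mathbf{r}_0)$ tangent to the particle are exactly those at the contact points $\mathbf{c}_{0j}=\mathbf{r}_0+\sigma_0\mathbf{n}_j$, with $\mathbf{n}_j=(\mathbf{r}_j-\mathbf{r}_0)/|\mathbf{r}_j-\mathbf{r}_0|$; expanding $|\mathbf{r}_0-\mathbf{y}|^2-\sigma_0^2=|\mathbf{r}_j-\mathbf{y}|^2-\sigma_j^2$ shows the radical facet for $j$ has normal along $\mathbf{n}_j$, so $R(\mathbf{r}_0)\subset\{\mathbf{y}:\mathbf{n}_j\cdot\mathbf{y}\le\mathbf{n}_j\cdot\mathbf{c}_{0j}\}$. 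Evaluating this containment at the point $\mathbf{a}+\sigma_0\mathbf{n}_j$, the farthest point of $\overline{B}_{\sigma_0}(\mathbf{a})$ along $\mathbf{n}_j$, gives $\mathbf{n}_j\cdot(\mathbf{a}-\mathbf{r}_0)\le 0$, i.e. $\mathbf{w}\cdot(\mathbf{r}_j-\mathbf{r}_0)\le 0$ for every contact $j$. Taking $\mathbf{w}$ as the separating normal with offset $b=\mathbf{w}\cdot\mathbf{r}_0$, we have $\mathbf{w}\cdot\mathbf{r}_0-b=0$ and $\mathbf{w}\cdot\mathbf{r}_j-b\le 0$, which are precisely the hypotheses of the corollary following Theorem~\ref{theor:hrepvrep}; hence $\mathbf{r}_0\in\partial\mathrm{Conv}(\mathbf{r}_0,\{\mathbf{r}_j\})$, and Theorem~\ref{theor:unstableHull} concludes that the particle is not locally stable. (A particle with no contacts carries no force and is trivially unstable, consistent with the claim.)

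I expect the main obstacle to be the geometric bookkeeping of the core step: establishing that the facet tangent to the particle at each contact has outward normal $\mathbf{n}_j$ and that $R(\mathbf{r}_0)$ lies on its inner side, and then selecting the supporting point of the displaced ball so that the resulting inequality carries the sign needed to invoke the half-space corollary. The remaining ingredients are either direct citations of the contact-facet results of Section~\ref{sec:rvProof} or the elementary inscription estimate.
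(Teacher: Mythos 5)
Your proof is correct, but it takes a genuinely different route from the paper's. The paper argues by contradiction: it assumes the particle is locally stable, splits on whether the non-unique MIS radius $\sigma$ is smaller or larger than $\sigma_0$, and in the latter case uses Theorem~\ref{theor:convexBall} to produce a continuum of translated inscribed copies of the particle, contradicting the assertion that a ``fully locally constrained'' particle admits no translation within $R(\mathbf{r}_0)$. That assertion --- that force-based local stability (Definition~\ref{def:stable}) forbids any rigid translation of $\overline{B}_{\sigma_0}(\mathbf{r}_0)$ inside its radical cell --- is stated rather than derived, and is the load-bearing step of the paper's argument. You instead proceed directly: you first verify the inscription $\overline{B}_{\sigma_0}(\mathbf{r}_0)\subset R(\mathbf{r}_0)$ (so $\sigma^\ast\ge\sigma_0$), extract a displaced inscribed copy with displacement $\mathbf{w}\neq\mathbf{0}$, and then convert $\mathbf{w}$ into a separating direction by evaluating the radical half-space $\mathbf{n}_j\cdot\mathbf{y}\le\mathbf{n}_j\cdot\mathbf{c}_{0j}$ at the supporting point $\mathbf{a}+\sigma_0\mathbf{n}_j$, obtaining $\mathbf{w}\cdot(\mathbf{r}_j-\mathbf{r}_0)\le 0$ for every contact. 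This places $\mathbf{r}_0$ on $\partial\mathrm{Conv}(\mathbf{r}_0,\{\mathbf{r}_j\})$ and lets you invoke Theorem~\ref{theor:unstableHull}, which \emph{is} rigorously derived from the force definition. The paper's version is shorter and conveys the floppy-mode intuition; yours is logically tighter because it replaces the unproven ``stable $\Rightarrow$ no translations'' step with an explicit reduction to the already-established convex-hull instability criterion, and it needs only one displaced copy rather than the full segment from Theorem~\ref{theor:convexBall}. The half-space computation you flag as the main obstacle checks out (the radical plane between $0$ and $j$ has normal along $\mathbf{r}_j-\mathbf{r}_0$ and passes through $\mathbf{c}_{0j}$ by Theorem~\ref{theor:HStouchRV}), and your parenthetical disposal of the zero-contact case is fine.
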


\begin{figure}[htp!]
\includegraphics[width=0.7\linewidth]{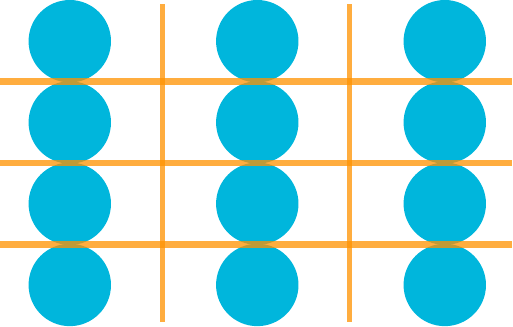}
\caption{An example of a radical Voronoi diagram (orange lines) for a set of particles (blue) which is highly degenerate. Here, because the MIS of each particle is not unique despite having radii equal to that of the particles, none of the particles are locally stable.}
\label{fig:radVoroDegen}
\end{figure}

\begin{proof}
We assume $M(R(\mathbf{r}_0))$ is not unique, such that $\overline{B}_\sigma(\mathbf{r}_1)\subset R(\mathbf{r}_0)$ and $\overline{B}_\sigma(\mathbf{r}_2) \subset R(\mathbf{r}_0)$ with $\mathbf{r}_1 \neq \mathbf{r}_2$ and there is no solution to $\overline{B}_\sigma'(\mathbf{r}_3) \subset R(\mathbf{r}_0)$ where $\sigma' > \sigma$. We then assume that the particle centered at $\mathbf{r}_0$ is locally stable and try to find a contradiction. If the particle is stable, there exist at least $d+1$ non-cohemispheric points $\mathbf{c}_{ij}$ given by Eq.~\eqref{eq:contactPoints} which, by Theorem~\ref{theor:HStouchRV}, have the property $\mathbf{c}_{ij} \in \overline{B}_{\sigma_0}(\mathbf{r}_0) \cap \partial R(\mathbf{r}_0)$. Because the particle centered at $\mathbf{r}_0$ is fully locally constrained, there exist no dilations or translations which maintain the hard sphere condition. We now have two scenarios to consider, which each contain a contradiction: $\sigma < \sigma_0$ and $\sigma \ge \sigma_0$.

If $\sigma < \sigma_0$, then neither $\overline{B}_\sigma(\mathbf{r}_1)$ nor $\overline{B}_\sigma(\mathbf{r}_1)$ represent the MIS, because $\overline{B}_{\sigma_0}(\mathbf{r}_0) \subset R(\mathbf{r}_0)$ has a larger radius. If $\sigma \ge \sigma_0$, then $\overline{B}_{\sigma_0}(\mathbf{r}_1) \subset \overline{B}_\sigma(\mathbf{r}_1) \subset R(\mathbf{\mathbf{r}_0})$. Theorem~\ref{theor:convexBall} states that all closed balls of radius $\sigma_0$ on the straight line between $\mathbf{r}_0$ and $\mathbf{r}_1$ are also contained in $R(\mathbf{r}_0)$. However, because the particle centered at $\mathbf{r}_0$ with radius $\sigma_0$ is stable, no translations $T$ exists such that $T(\overline{B}_{\sigma_0}(\mathbf{r}_0)) \subset R(\mathbf{r}_0)$. Because no case relating $\sigma$ and $\sigma_0$ exists without a contradiction, this implies that if $M(R(\mathbf{r}_0))$ is not unique in a hard particle system, then the particle centered at $\mathbf{r}_0$ is not locally stable.
\end{proof}

A packing with highly degenerate (non-unique) maximum inscribed spheres is illustrated in Fig.~\ref{fig:radVoroDegen}, where clearly the particles are not stable.

\begin{corollary}
If $M(K)$ is unique for a polytope $K$, then $M(K) \cap \partial K$ contains at least $d+1$ non-cohemispheric points.
\label{cor:uniqueMIS}
\end{corollary}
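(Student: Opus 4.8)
The plan is to establish the contrapositive: I will show that whenever the contact set $M(K)\cap\partial K$ fails to contain $d+1$ non-cohemispheric points, the maximum inscribed sphere cannot be unique. The reduction of the counting claim to a cohemisphericity claim is immediate from Theorem~\ref{theor:nonCohemisphericPoints}: any $d$ or fewer points lying on the surface of the inscribed sphere are automatically cohemispheric, so a non-cohemispheric contact set must contain at least $d+1$ points. It therefore suffices to prove that if $M(K)$ is unique then its contact points $\{\mathbf{c}_i\}=M(K)\cap\partial K$ are non-cohemispheric, and equivalently, that a cohemispheric contact set forces non-uniqueness.

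First I would fix the center $\mathbf{r}$ and radius $\sigma$ of $M(K)$ and, invoking the H-Representation of Theorem~\ref{theor:hrepvrep}, write $K$ as the intersection of its finitely many facet halfspaces $\{\mathbf{y}:\mathbf{n}_k\cdot\mathbf{y}\le b_k\}$ with outward unit normals $\mathbf{n}_k$. The containment $\overline{B}_\sigma(\mathbf{r})\subset K$ is then the statement that the signed distance $b_k-\mathbf{n}_k\cdot\mathbf{r}$ from the center to each facet is at least $\sigma$, with equality precisely on the facets touched by the sphere; on a touched facet the contact point is the foot of the perpendicular, so its outward normal satisfies $\mathbf{n}_k=(\mathbf{c}_i-\mathbf{r})/\sigma$. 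Now suppose toward a contradiction that the contacts are cohemispheric, so that by definition there exists $\mathbf{a}\neq\mathbf{0}$ with $\mathbf{a}\cdot(\mathbf{c}_i-\mathbf{r})\ge 0$ for every contact. I would then translate the center to $\mathbf{r}'=\mathbf{r}-\epsilon\mathbf{a}$ and track how each facet distance responds.

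The key computation is that translating by $-\epsilon\mathbf{a}$ changes the signed distance to facet $k$ by $+\epsilon\,\mathbf{n}_k\cdot\mathbf{a}$. For every touched facet this increment equals $\epsilon\,\mathbf{a}\cdot(\mathbf{c}_i-\mathbf{r})/\sigma\ge 0$, so the distance does not decrease and the ball of radius $\sigma$ continues to fit against those facets; for the finitely many untouched facets the distance strictly exceeds $\sigma$, so a single small enough $\epsilon$ preserves containment against all of them simultaneously. Hence $\overline{B}_\sigma(\mathbf{r}')\subset K$ with $\mathbf{r}'\neq\mathbf{r}$, exhibiting a second inscribed ball of the same maximal radius and contradicting uniqueness of $M(K)$. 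One may equivalently phrase the resulting one-parameter family of coincident-radius balls through Theorem~\ref{theor:convexBall}. This forces the contacts to be non-cohemispheric, and the count of at least $d+1$ then follows from Theorem~\ref{theor:nonCohemisphericPoints}.

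The main obstacle I anticipate is the bookkeeping at the boundary of the argument rather than any deep difficulty: I must be careful with the sign conventions in the signed-distance formula, and I must handle equatorial contacts---those with $\mathbf{a}\cdot(\mathbf{c}_i-\mathbf{r})=0$---for which the inequality is non-strict and the sphere merely slides along the facet rather than pulling away from it. The non-strict case is exactly why I need the translation argument, which only requires distances not to decrease, rather than a cruder argument that the sphere could be grown; finiteness of the facet set, guaranteed because $K$ is a polytope, is what lets me choose a uniform $\epsilon$ that works for all untouched facets at once.
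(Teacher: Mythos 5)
Your proof is correct, but it takes a genuinely different route from the paper's. The paper argues physically: uniqueness of $M(K)$ means no translation keeps it inside $K$, so it is ``fully constrained,'' and then a fictive force is imposed at each contact so that Corollary~\ref{cor:nonCohemisphericForces} can be invoked to demand $d+1$ non-cohemispheric contacts. Strictly speaking the paper is using the \emph{converse} of that corollary (constrained $\Rightarrow$ non-cohemispheric contacts, rather than non-cohemispheric contacts $\Rightarrow$ stable), so its proof leans on physical intuition at exactly the point where your argument supplies the missing geometry: you prove the contrapositive directly by writing $K$ in its H-Representation (Theorem~\ref{theor:hrepvrep}), identifying the outward normal of each touched facet with $(\mathbf{c}_i-\mathbf{r})/\sigma$, and exhibiting an explicit translation $\mathbf{r}\mapsto\mathbf{r}-\epsilon\mathbf{a}$ that preserves containment whenever the contacts are cohemispheric with witness $\mathbf{a}$ --- the finiteness of the facet set is correctly used to pick a uniform $\epsilon$ for the untouched facets, and the non-strict inequality correctly handles equatorial contacts. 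Both arguments then finish identically by citing Theorem~\ref{theor:nonCohemisphericPoints} to convert non-cohemisphericity into the count of at least $d+1$ points. What your version buys is a self-contained, fully rigorous deduction that does not route through the force-balance language; what the paper's version buys is brevity and a closer tie to the physical picture of the MIS as a fictitious particle, which is the intuition the authors want the reader to carry into Theorem~\ref{theor:mainRV}.
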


\begin{proof}
If $M(K)$ is unique, then there are no translations represented by the transformation $T$ which can be done such that $T(M(K)) \subset K$. Thus $M(K)$ is fully constrained by the boundary $\partial K$. By Corollary~\ref{cor:nonCohemisphericForces}, if we impose a fictive force on $M(K)$ from each point of contact $\{\mathbf{c}_i\}$ between $M(K)$ and $\partial K$, then there must be at least $d+1$ non-cohemispheric $\mathbf{c}_i$ for $M(K)$ to be stable. Thus $M(K) \cap \partial K$ contains at least $d+1$ non-cohemispheric points.
\end{proof}

\begin{theorem}
In a packing of hard particles, a particle with center  $\mathbf{r}_0$ and radius $\sigma_0$ is locally stable if and only if $M(R(\mathbf{r}_0))$ is unique and has center $\mathbf{r}_0$ and radius $\sigma_0$.
\label{theor:mainRV}
\end{theorem}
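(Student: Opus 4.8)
The plan is to prove the biconditional in two directions, leaning on the machinery already assembled: Theorem~\ref{theor:HStouchRV} and Corollary~\ref{cor:RVTangentPlane} to locate the contact points on $\partial R(\mathbf{r}_0)$, Theorem~\ref{theor:convexBall} to slide inscribed balls around, and Theorem~\ref{theor:noUniqueNoStable} together with Corollary~\ref{cor:uniqueMIS} to dispatch uniqueness. A recurring bookkeeping point I would establish at the outset is that contact points and contact forces carry opposite orientations: since $\mathbf{c}_{0j}-\mathbf{r}_0 \propto \mathbf{r}_j-\mathbf{r}_0$ while the repulsive force $\mathbf{f}_{0j}\propto \mathbf{r}_0-\mathbf{r}_j$, a set of contact points is cohemispheric if and only if the corresponding forces are (replace $\mathbf{a}$ by $-\mathbf{a}$). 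Hence ``$d+1$ non-cohemispheric contact points'' and ``$d+1$ non-cohemispheric forces'' are interchangeable.

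First direction (stability $\Rightarrow$ unique MIS equal to the particle). I would first note that hard-sphere non-overlap alone gives $\overline{B}_{\sigma_0}(\mathbf{r}_0)\subset R(\mathbf{r}_0)$: any point of the particle has nonpositive power with respect to particle $0$ and, lying outside every other particle, nonnegative power with respect to each neighbor, so it lies in $R(\mathbf{r}_0)$ by Definition~\ref{def:power}. Thus the particle is itself an inscribed ball and the inradius satisfies $\sigma_M\ge\sigma_0$. Stability supplies uniqueness of the MIS directly, as the contrapositive of Theorem~\ref{theor:noUniqueNoStable}. Let the unique MIS be $\overline{B}_{\sigma_M}(\mathbf{r}_M)$. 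To pin it to the particle I would argue by contradiction using Theorem~\ref{theor:convexBall}: because $\sigma_M\ge\sigma_0$ we have both $\overline{B}_{\sigma_0}(\mathbf{r}_0)\subset R(\mathbf{r}_0)$ and $\overline{B}_{\sigma_0}(\mathbf{r}_M)\subseteq\overline{B}_{\sigma_M}(\mathbf{r}_M)\subset R(\mathbf{r}_0)$, so every radius-$\sigma_0$ ball centered on the segment between $\mathbf{r}_0$ and $\mathbf{r}_M$ lies in $R(\mathbf{r}_0)$. If $\mathbf{r}_M\ne\mathbf{r}_0$ this is a rigid translation of the particle that preserves the hard-sphere condition, which a stable particle cannot admit; if instead $\mathbf{r}_M=\mathbf{r}_0$ but $\sigma_M>\sigma_0$, then a contact point $\mathbf{c}_{0j}$, which lies on $\partial R(\mathbf{r}_0)$ by Theorem~\ref{theor:HStouchRV}, would fall in the open ball $B_{\sigma_M}(\mathbf{r}_0)\subset\mathrm{int}\,R(\mathbf{r}_0)$, a contradiction (equivalently, a forbidden dilation). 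Both cases being impossible forces $\mathbf{r}_M=\mathbf{r}_0$ and $\sigma_M=\sigma_0$, so the unique MIS is exactly $\overline{B}_{\sigma_0}(\mathbf{r}_0)$.

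Second direction (unique MIS equal to the particle $\Rightarrow$ stability). Here I would run the earlier corollaries forward. Uniqueness of $M(R(\mathbf{r}_0))=\overline{B}_{\sigma_0}(\mathbf{r}_0)$ gives, via Corollary~\ref{cor:uniqueMIS}, that $\overline{B}_{\sigma_0}(\mathbf{r}_0)\cap\partial R(\mathbf{r}_0)$ contains at least $d+1$ non-cohemispheric points. By Corollary~\ref{cor:RVTangentPlane} those boundary points are precisely the contact points $\{\mathbf{c}_{0j}\}$ with the neighbors. Translating back to forces via the orientation remark, the particle carries at least $d+1$ non-cohemispheric contact forces; since the packing sits at a mechanical equilibrium with zero net force, Corollary~\ref{cor:nonCohemisphericForces} then certifies that these forces span $\mathbb{R}^d$ and admit a vanishing resultant, i.e.\ the particle is locally stable in the sense of Definition~\ref{def:stable}.

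The step I expect to be the crux is the first direction's rigidity argument: showing that the unique inscribed sphere can be neither larger than nor displaced from the particle. Uniqueness itself is free from Theorem~\ref{theor:noUniqueNoStable}, but identifying the unique MIS with $\overline{B}_{\sigma_0}(\mathbf{r}_0)$ is where the geometry (contact points pinned to $\partial R(\mathbf{r}_0)$) must be married to the mechanics (a stable particle admits no translation or dilation). The cleanest leverage is Theorem~\ref{theor:convexBall}, which converts the existence of any distinct equal-or-larger inscribed ball into a continuous family of inscribed copies of the particle and hence into a forbidden motion, while the boundary membership of the contact points closes off the remaining pure-dilation case.
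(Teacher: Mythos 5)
Your proposal is correct and follows essentially the same route as the paper: the contrapositive of Theorem~\ref{theor:noUniqueNoStable} plus Theorem~\ref{theor:convexBall} to rule out a displaced or enlarged MIS in the forward direction, and the chain Corollary~\ref{cor:uniqueMIS} $\to$ Corollary~\ref{cor:RVTangentPlane} $\to$ Corollary~\ref{cor:nonCohemisphericForces} in the reverse. If anything, your execution is slightly more careful than the paper's: you explicitly establish $\overline{B}_{\sigma_0}(\mathbf{r}_0)\subset R(\mathbf{r}_0)$, you negate ``center $\mathbf{r}_0$ and radius $\sigma_0$'' as a disjunction rather than the paper's conjunction ($\mathbf{r}_1\neq\mathbf{r}_0$ \emph{and} $\sigma_1\neq\sigma_0$), and you separately close off the pure-dilation case via the contact points pinned to $\partial R(\mathbf{r}_0)$.
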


\begin{proof}

First, we must prove that in a hard sphere system, a particle with center $\mathbf{r}_0$ and radius $\sigma_0$ being locally stable implies that $M(R(\mathbf{r}_0))$ is unique and has center $\mathbf{r}_0$ and radius $\sigma_0$. Following the logic of the proof of Theorem~\ref{theor:noUniqueNoStable}, we assume that $M(R(\mathbf{r}_0))$ has center $\mathbf{r}_1$ and radius $\sigma_1$ with $\mathbf{r}_1 \neq \mathbf{r}_0$ and $\sigma_1 \neq \sigma_0$ and find a contradiction to show that $\mathbf{r}_1 \neq \mathbf{r}_0$  and $\sigma_1 = \sigma_0$. If $\sigma_1 < \sigma_0$, then this does not correspond to the maximum inscribed sphere. If $\sigma_1 \ge \sigma_0$, then $\overline{B}_{\sigma_0}(\mathbf{r}_1) \subset \overline{B}_{\sigma_1}(\mathbf{r}_1) \subset R(\mathbf{r}_0)$ and thus by Theorem~\ref{theor:convexBall} $\overline{B}_{\sigma_0}(\mathbf{r}_2) \subset R(\mathbf{r}_0)$ for all $\mathbf{r}_2$ on a straight line between $\mathbf{r}_0$ and $\mathbf{r}_1$. But because $\overline{B}_{\sigma_0}(\mathbf{r_0})$ is locally stable, no translations or dilations exist which remain in $R(\mathbf{r}_0)$, so $\mathbf{r}_1 = \mathbf{r}_0$ and $\sigma_1 = \sigma_0$.

Second, we must prove that in a hard sphere system, for a particle centered at $\mathbf{r}_0$ with radius $\sigma_0$,  $M(R(\mathbf{r}_0))$ being unique and having center $\mathbf{r}_0$ and $\sigma_0$ implies that the particle is stable. This follows immediately from Corollary~\ref{cor:uniqueMIS}, as the particle has $d+1$ non-cohemispheric points of contact with $R(\mathbf{r}_0)$, which by Corollary~\ref{cor:RVTangentPlane}, correspond to contacts with neighboring particles. Thus, by Corollary~\ref{cor:nonCohemisphericForces}, the particle centered at $\mathbf{r}_0$ with radius $\sigma_0$ is stable.
\end{proof}

\begin{figure}[htp!]
\includegraphics[width=0.9\linewidth]{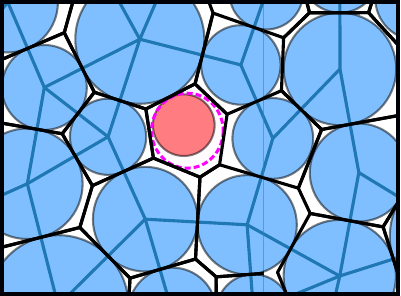}
\caption{The radical Voronoi diagram (black lines) is computed for a set of particles of different radii with contacts displayed as blue lines. All blue particles, labelled $i$, have $M(R(\mathbf{r}_i)) = \overline{B}_{\sigma_i}(\mathbf{r}_i)$ and are thus stable. The red particle, which we will call $0$ is a rattler, and its MIS is shown as a dashed magenta line. We see clearly that $M(R(\mathbf{r}_0)) \neq \overline{B}_{\sigma_0}(\mathbf{r}_0)$}
\label{fig:RVstability}
\end{figure}

\section{Algorithmic complexity}
\label{sec:complexity}

Theorems~\ref{theor:mainCH} and~\ref{theor:mainRV} provide a natural recursive algorithm for determining the stable set of particles in a packing, and, through its complement, the set of rattlers. The algorithm begins with a tentative statement that all particles are stable, and it loops over each particle testing for stability, taking the function $\mathrm{isStable}(i)$ from either Theorem~\ref{theor:mainCH}, Theorem~\ref{theor:mainRV}, or Eq. 12 of Ref.~\cite{donev_linear_2004}, considering only the stable set of particles. The algorithm ends when no changes are made to the stable list in a full loop. The structure of the algorithm is similar to that of Ref.~\cite{donev_linear_2004}, and as expected, it produces an identical stable list.

\begin{algorithm}[H]
\caption{Global Stability Algorithm}
\begin{algorithmic}[1]
\State $i \in \mathrm{stableList} \,\,\forall i$
\State $\mathrm{unstableList} = \emptyset$
\State $\mathrm{flip} \gets true$
\While{flip} 
    \State $\mathrm{flip} \gets false$
    \For{$i \in \mathrm{stableList}$}
       \If{isUnstable(i)}
            \State $\mathrm{flip} \gets true$
            \State Move $i$ from stableList to unstableList
       \EndIf
    \EndFor
\EndWhile
\State{Return stableList}
\end{algorithmic}
\end{algorithm}

The worst-case scenario for this algorithm is a packing in which only a single particle is initially unstable, but its removal destabilizes one of its neighbors, and so on.  Such a situation will require $N$ iterations through the algorithm, each of which takes $\mathcal{O}(N)$ time, yielding a total worst case runtime of $\mathcal{O}(N^2)$. We note, however, that no typical case approaches this complexity. The method of Ref.~\cite{lerner_simulations_2013}, meanwhile, scales as at least $\mathcal{O}(d^3N^3)$~\cite{charbonneau_jamming_2023}.

The only difference between the methods of Ref.~\cite{donev_linear_2004}, Theorem~\ref{theor:mainCH}, and Theorem~\ref{theor:mainRV} is the speed of the function $\mathrm{isStable}(i)$.
For a particle with $n$ contacting particles (where $n \sim \mathcal{O}(d)$), the linear programming method scales as $\mathcal{O}(n^{2+a})$ where $a=\frac{1}{18}$~\cite{jiang_faster_2021} while the convex hull scales as $\mathcal{O}(n^{\lfloor d/2 \rfloor})$ in the worst case scenario, where $\lfloor \cdot \rfloor$ is the floor function~\cite{chazelle_optimal_1993}. The radical Voronoi diagram for an individual cell can be computed in $\mathcal{O}(n^{\lceil d/2 \rceil})$ where $\lceil \cdot \rceil$ is the ceiling function. Thus while the radical Voronoi method is slower than the convex hull method in odd dimensions, it is of the same order in even dimensions. The calculation of the MIS is then either a linear programming problem~\cite{gritzmann_inner_1992, gritzmann_computational_1993} or a minimization problem~\cite{morse_geometric_2014} whose complexity has not yet been interrogated. The worst case scenario then makes this calculation the rate determining step, and it is thus no faster than the linear programming methods of Ref.~\cite{donev_linear_2004}. By comparison, we see that the convex hull algorithm is faster than the linear programming algorithm for at least $d<6$.

\section{Further extensions}
\label{sec:conclusion}

We have shown that the convex hull and the radical Voronoi cell can be used to quickly determine the stability of individual spheres in a packing with only minimal requirements on the interparticle potential. It is straightforward to show that the construction can be applied more generally in a variety of cases. Here, we list several:

\begin{enumerate}

\item In a spring network under compression, an individual node is unstable if it is on the surface of the convex hull of its connecting nodes.

\item A particle of any shape is unstable if the only forces acting on it are point forces directed towards its center of mass, and the center of mass is on the surface of the convex hull of the contact points and the center of mass.

\item In Mari-Kurchan (MK) interactions~\cite{mari_jamming_2009, mari_dynamical_2011}, where the distance between particles is given by ${h_{ij}^{\mathrm{MK}} = \frac{\abs{\mathbf{r}_i - \mathbf{r}_j + \mathbf{\Lambda}_{ij}}}{\sigma_i + \sigma_j}}$ where $\mathbf{\Lambda}_{ij}$ is a random vector with $\mathbf{\Lambda}_{ij} = -\mathbf{\Lambda}_{ji}$, a particle $\mathbf{r}_0$ with contacts $\{\mathbf{r}_j\}$ is unstable if $\mathbf{r}_0 \in \partial \mathrm{Conv}(\mathbf{r}_0,\{\mathbf{r}_j + \mathbf{\Lambda}_{ij}\})$. This method was used in Ref.~\cite{charbonneau_jamming_2023}. Note that this is true despite not technically being a central force potential.

\item Several recent studies have analyzed soft sphere systems during energy minimization~\cite{charbonneau_jamming_2023, stanifer_avalanche_2022, nishikawa_relaxation_2021, nishikawa_relaxation_2022, manacorda_gradient_2022}, wherein it may be important to study the evolution of rattlers and stable subsystems. Here, the convex hull theorem may be used, with the additional caveat that a particle is only locally stable if the sum of all forces acting on it is zero, and if the forces acting on it span $\mathbb{R}^d$.

\item Following the logic of Sec.~\ref{sec:rvProof}, we conjecture that Theorem~\ref{theor:mainRV} also holds for additively-weighted Voronoi cells and any generalization of Voronoi cells $G$ for which the contact point of two hard spheres ($i$ and $j$) is contained on the surface of the generalized Voronoi cell, i.e. $\overline{B}_{\sigma_i}(\mathbf{r}_i) \cap \overline{B}_{\sigma_j}(\mathbf{r}_j) \in \partial G(\mathbf{r}_i)$ and $\overline{B}_{\sigma_i}(\mathbf{r}_i) \cap \overline{B}_{\sigma_j}(\mathbf{r}_j) \in \partial G(\mathbf{r}_j)$. However, these cells are generically non-convex, and so some of the tools we have used do not suffice. 

\end{enumerate}

These extensions show the utility of our methods, which extend beyond simple sphere packings. It is our hope that this work not only provides a simple computational tool, but helps to illuminate the interplay between geometry and mechanical rigidity.

\bibliographystyle{apsrev4-2}
\bibliography{convexHull,footnotes}

\end{document}